\newtheorem{theorem}{Theorem}
\newtheorem{lemma}[theorem]{Lemma}
\newtheorem{corollary}[theorem]{Corollary}
\newtheorem{example}[theorem]{Example}
\begin{document}


\begin{frontmatter}

\title{Characterizations of graph classes via convex geometries: a survey\footnote{Mitre C. Dourado is partially supported by CNPq, Brazil (305404/2020-2 and 403601/2023-1) and FAPERJ, Brazil (SEI-260003/015314/2021); Fábio Protti is partially supported by CNPq (304117/2019-6) and FAPERJ (E-26/201.083/2021) ; Rudini Sampaio is partially supported by CNPq (404479/2023-5 and 311070/2022-1) and CAPES (88881.712024/2022-01).}}


\author{Mitre C. Dourado}
\address{Instituto de Computa\c c\~ao, Universidade Federal do Rio de Janeiro, Brazil}
\ead{mitre@ic.ufrj.br}

\author{Marisa Gutierrez}
\address{CMaLP, Facultad de Ciencias Exactas, Universidad Nacional de La Plata, Argentina}
\ead{marisa@mate.unlp.edu.ar}

\author{F\'abio Protti}
\address{Instituto de Computa\c c\~ao, Universidade Federal Fluminense, Brazil}
\ead{fabio@ic.uff.br }

\author{Rudini Sampaio}
\address{Departamento de Computa\c c\~ao, Universidade Federal do Cear\'a, Brazil}
\ead{rudini@dc.ufc.br}

\author{Silvia Tondato}
\address{CMaLP, Facultad de Ciencias Exactas, Universidad Nacional de La Plata, Argentina}
\ead{tondato@mate.unlp.edu.ar}

\begin{abstract}
Graph convexity has been used as an important tool to better understand the structure of classes of graphs. Many studies are devoted to determine if a graph equipped with a convexity is a {\em convex geometry}. In this work, we survey results on characterizations of well-known classes of graphs via convex geometries. We also give some contributions to this subject.
\end{abstract}

\begin{keyword}
Convex geometry \sep Convexity \sep Convex hull \sep Graph convexity
\end{keyword}

\end{frontmatter}

\section{Introduction}

A \textit{convexity} on a nonempty set $V$ is a family $\mathscr{C}$ of subsets of $V$ (called {\em convex sets}) such that: (a) $\emptyset$ and $V$ are convex sets; (b) the intersection of convex sets is a convex set. If $V=V(G)$ for some graph $G$, then $\mathscr{C}$ is a \textit{graph convexity} of $G$. The structure of a convexity is based on the way convex sets are defined. In the context of graph convexities, there are many examples in the literature where convex sets are defined over a {\em path system}, typically according to the following rule: fix a family ${\mathscr P}_G$ of paths (or, more generally, walks) of a graph $G$, and say that a set $S\subseteq V(G)$ is convex if no vertex $x\notin S$ lies in a member of ${\mathscr P}_G$ that starts and ends at two vertices of $S$. The most natural example is the so-called {\em geodesic convexity}~\cite{pelayo}, for which ${\mathscr P}_G$ is the family of all the {\em shortest paths} of $G$. Other important examples are: the {\em monophonic convexity}~\cite{dourado-et-al,duchet}, the {\em $m^3$-convexity}~\cite{dragan-et-al}, the {\em $l^k$-convexity}~\cite{gpt}, the {\em toll convexity}~\cite{alcon-et-al}, and the {\em weakly toll convexity}~\cite{gutierrez-tondato}, defined over induced paths, induced paths of length at least three, induced paths of length at most $k$, tolled walks, and weakly toll walks, respectively. Other rules to define convex sets, not based on path systems, have also been investigated; an important example comes from the so-called {\em Steiner convexity}, introduced in~\cite{caceres}. In this case, a set $S\subseteq V(G)$ is {\em Steiner convex} if, for every $S'\subseteq S$, the vertices of any Steiner tree with terminal set $S'$ belong to $S$.

Graph convexities have been studied in many contexts. A strong direction of research has focused on determining convexity invariants, such as the {\em hull number}, the {\em interval number}, and the {\em convexity number}, among others. A major reference work by Pelayo~\cite{pelayo} gives an extensive overview on convexity invariants, applied to the case of the geodesic convexity.

Other studies are devoted to determine if a graph convexity is a {\em convex geometry}. This line of research is closely linked to the original idea of defining a combinatorial abstraction of convex sets in Geometry. The convex sets in the plane are those subsets $S$ such that no point outside $S$ lies on a line segment with endpoints in $S$. The {\em convex hull} of a set of points $A$ in the plane is the smallest convex set that contains $A$. A point $x$ of a convex set $S$ is an {\em extreme point} of $S$ if $S\setminus\{x\}$ is also a convex set, that is, if there is no line segment with endpoints in $S\setminus\{x\}$ containing $x$. For instance, an $n$-polygon (considered as a closed region) is clearly a convex set in the plane, and its extreme points are precisely its $n$ vertices\footnote{Here, the term {\em vertices} is used in the geometric sense.}. In addition, the polygon is the convex hull of its $n$ extreme points.  Another example is a circle (closed circular region); it is clearly a convex set in the plane, and all the points lying on the circle border are its extreme points. Again, the circle is the convex hull of such points.

The above concepts can be transferred to the combinatorial field in a natural way. We refer the reader to~\cite{farber-jamison}. Let $G$ be a graph and let $\mathscr{C}$ be a convexity of $G$. Given a set $S\subseteq V(G)$, the smallest set $H\in\mathscr{C}$ containing $S$ is called the \textit{convex hull} of $S$. A vertex $x$ of a convex set $S$ is an \textit{extreme vertex} of $S$ if $S\backslash\{x\}$ is also convex. The convexity $\mathscr{C}$ is a \textit{convex geometry} if it satisfies the \textit{Krein-Milman} (or \textit{Minkowski-Krein-Milman}) property~\cite{krein-milman,pelayo-survey}: {\em Every convex set is the convex hull of its extreme vertices.}  The main question dealt with in this survey is: by fixing a rule $r$ to define the convex sets (e.g., a rule based on some path system), determine the class of graphs whose $r$-convexities are convex geometries. For instance, by fixing induced paths, we can obtain the following characterization: a graph $G$ is chordal if and only if the monophonic convexity of $G$ is a convex geometry~\cite{farber-jamison}. Ptolemaic graphs, interval graphs, proper interval graphs, weak bipolarizable graphs, and 3-fan-free chordal graphs can also be characterized in this way by considering, respectively, the geodesic convexity~\cite{farber-jamison}, the toll covexity~\cite{alcon-et-al}, the weakly toll convexity~\cite{gutierrez-tondato}, the $m^3$-convexity~\cite{dragan-et-al}, and the Steiner convexity~\cite{caceres}. In~\cite{gpt}, a characterization of graphs with $l^k$-convexities that are convex geometries are studied. Section 3 discusses in detail most of these characterizations.

Convex geometries are equivalent, by complementation, to {\em an\-ti\-ma\-troids} (see \cite{korte-lovasz-schrader}). An antimatroid consists of a finite family ${\mathcal F}$ of finite sets (called {\em feasible sets}) such that: (a) ${\mathcal F}$ is closed under unions; (b) if $S$ is a nonempty feasible set, then there is $x\in S$ for which $S\setminus \{x\}$ is also a feasible set (that is, ${\mathcal F}$ is an {\em accessible set system}). If $U$ is the union of the sets in ${\mathcal F}$, then the family of complementary sets ${\mathscr C}=\{U\setminus S\mid S\in {\mathcal F}\}$ is a convex geometry. It is not difficult to see that ${\mathscr C}$ is a convexity; in addition, it satisfies the {\em anti-exchange property}~\cite{edelman,korte-lovasz-schrader}, which is complementary to the accessibility property of antimatroids (property (b) above). In Section 2, we present in more detail the anti-exchange property, which will be useful to derive some results in Section 4.

This survey is organized as follows. Section 2 provides the necessary background. In Section 3, we review the main results in the literature on characterizations of graph classes via convex geometries; chordal, Ptolemaic, strongly chordal, interval, proper interval graphs, among other classes, are dealt with. In Section 4 we present some contributions; namely, we show that forests, forests of stars, cographs, bipartite graphs, planar graphs, and triangle-free graphs can be well characterized via convex geometries. Our concluding remarks are presented in Section 5.

\section{Preliminaries}

All the graphs in this work are connected, unless otherwise stated. A path (resp., cycle) in a graph $G$ is {\em induced} if there is no edge of $G$ linking two nonconsecutive vertices of it. We denote by $P_n$ (resp., $C_n$) the induced path (resp., cycle) with $n$ vertices. For a vertex $v\in V(G)$, let $N(v)$ denote the {\em open neighborhood} (or simply {\em neighborhood}) of $v$ in $G$. In addition, let $N[v]$ denote the {\em closed neighborhood} of $v$ in $G$, where $N[v]=N(v)\cup\{v\}$. If $N[v]$ is a clique, then $v$ is said to be a {\em simplicial vertex}. For a set $S\subseteq V(G)$, we denote by $G[S]$ the subgraph of $G$ {\em induced} by $S$, defined as follows: $V(G[S])=S$ and $E(G[S])=\{xy\in E(G)\mid x,y\in S\}$. For a graph $F$, we say that $G$ is {\em $F$-free} if $G$ does not contain $F$ as an induced subgraph. For a family ${\cal F}=\{G_1,G_2,\ldots,G_k\}$ of prefixed graphs, we say that $G$ is {\em $(G_1,G_2,\ldots,G_k)$-free} (or simply {\em ${\cal F}$-free}) if $G$ is $G_i$-free for $1\leq i\leq k$. For ${\cal C}=\{C_k \mid k\geq 4\}$, $G$ is said to be a {\em chordal graph} if $G$ is ${\cal C}$-free.    

Let ${\mathscr P}$ be a function (called {\em path system}) that maps each connected graph $G$ to a collection ${\mathscr P}_G$ of walks of $G$. For instance, ${\mathscr P}$ can map $G$ to its shortest paths, induced paths, tolled walks etc. The members of ${\mathscr P}_G$ are generically called {\em p-walks} of $G$. All the path systems dealt with in this paper satisfy the following property: For every $G$, all the walks of $G$ formed by a single vertex or two adjacent vertices are p-walks.

The {\em interval} $I(u,v)$ of $u,v\in V(G)$ is the set of all vertices that lie in a p-walk from $u$ to $v$. For $S\subseteq V(G)$, we define $I(S)=\cup_{u,v\in S} I(u,v)$. A set $S\subseteq V(G)$ is {\em convex} if $I(S)=S$. Let ${\mathscr C}=\{S\subseteq V(G): I(S)=S\}$. Clearly, ${\mathscr C}$ is a convexity of $G$, defined over the collection ${\mathscr P}_G$ of p-walks.

A vertex $x$ of a convex set $S\subseteq V(G)$ is an {\em extreme vertex} of $S$ if $S\setminus\{x\}$ is also a convex set. This definition implies that: (a) no p-walk in $G[S]$ contains $x$ as an internal vertex; (b) $x\notin I(u,v)$ for any pair of distinct vertices $u,v\in S$. The set of extreme vertices of $S$ is denoted by ${\mathit ext}(S)$. The {\em convex hull} of a set $S \subseteq V(G)$, denoted by $H(S)$, is the smallest set convex set of $G$ containing $S$. It can be shown that $H(S)$ can be iteratively constructed by successive applications of the interval operator, as follows: starting with a set $S\subseteq V(G)$, define $S_0=S$ and $S_i=I(S_{i-1}), i\geq 1$. Then, there exists $k\geq 1$ such that $S_k=S_{k-1}=H(S)$.

The concept of interval is the combinatorial analog of a line segment between two points in the plane. Just as the geometric convex hull of a set of points in the plane can be built using line segments, the convex hull of a set $S$ of vertices in a graph uses the notion of interval for the iterative construction described above.




We say that a convexity ${\mathscr C}$ of a graph $G$ is a \textit{convex geometry} (or that it is {\em geometric}) if it satisfies the following property: for every $S\in{\mathscr C}$, it holds that $S=H({\mathit ext}(S))$. This is called the \textit{Krein-Milman} property~\cite{krein-milman}. Alternatively, the convexity ${\mathscr C}$ is a convex geometry if the following axiom is satisfied:

\medskip

\noindent {\bf Anti-exchange property}~\cite{edelman,korte-lovasz-schrader}\\
If $y,z \notin H(S)$ and $z\in H(S\cup\{y\})$, then $y\notin H(S\cup\{z\})$.

\medskip

The anti-exchange property is a combinatorial abstraction of the usual convex hull operator in the plane: for two points $y$ and $z$ not in the convex hull of $S$, if $z$ is in the convex hull of $S\cup\{y\}$, then $y$ is outside the convex hull of $S\cup\{z\}$. See Figure 1.

\begin{figure}[h]
\begin{center}
\begin{tikzpicture}
\filldraw[color=black, fill=gray!10] (0, 0) rectangle (2, 2);
\filldraw[black] (0,0) circle (2pt) node[anchor=south west]{$a$};
\filldraw[black] (0,2) circle (2pt) node[anchor=north west]{$b$};
\filldraw[black] (2,2) circle (2pt) node[anchor=north east]{$c$};
\filldraw[black] (2,0) circle (2pt) node[anchor=south east]{$d$};
\filldraw[black] (3,1) circle (2pt) node[anchor=west]{$z$};
\filldraw[black] (4,1) circle (2pt) node[anchor=west]{$y$};
\node at (1,1) {$H(S)$};
\draw[dashed] (2,2) -- (4,1);
\draw[dashed] (2,0) -- (4,1);
\end{tikzpicture}
\caption{Anti-exchange property applied to $S=\{a,b,c,d\}$.}
\end{center}
\end{figure}

\section{An overview of graph classes and convex geometries}

In this section, we briefly review the main existing results on convex geometries in the literature on graph convexity. Chordal graphs play in important role in this section. For basic facts on chordal graphs, see~\cite{golumbic80}.

\subsection{Monophonic convexity}

Let ${\mathscr P}^m$ be the path system that maps each graph $G$ to its induced paths. The convexity defined over the induced paths of $G$ is called the {\em monophonic convexity} of $G$. It is easy to see that $v$ is an extreme vertex of a (monophonically) convex set $S\subseteq V(G)$ if and only if $v$ is a simplicial vertex of $G[S]$.

Suppose that $G$ contains an induced cycle $C$ with at least four vertices. Clearly, $S=H(V(C))$ is convex and contains no simplicial vertices, i.e., ${\mathit ext}(S)=\emptyset$. Thus, the Krein-Milman property fails in this case. In other words, if the monophonic convexity of $G$ is a convex geometry, then $G$ is chordal. On the other hand, if $G$ is chordal, then every nonsimplicial vertex of $G$ lies on an induced path between two simplicial vertices~\cite{farber-jamison}. As chordality is an hereditary property, this implies that every convex subset of vertices of $G$ is indeed the convex hull of its extreme vertices. Hence:

\begin{theorem} \label{the:monophonic} {\em \cite{farber-jamison}}
A graph $G$ is chordal if and only if the monophonic convexity of $G$ is a convex geometry.
\end{theorem}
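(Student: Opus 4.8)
The plan is to prove both directions of the equivalence, establishing that monophonic convex geometries are exactly the chordal graphs. The key fact I would lean on throughout is the characterization already noted in the excerpt: a vertex $v$ is extreme in a convex set $S$ if and only if $v$ is simplicial in $G[S]$. This translates the abstract Minkowski--Krein--Milman property into a concrete statement about simplicial vertices, and chordal graphs are precisely those in which simplicial vertices are abundant (indeed, every induced subgraph of a chordal graph has a simplicial vertex).

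For the necessity direction (convex geometry $\Rightarrow$ chordal), I would argue by contraposition. Suppose $G$ is not chordal, so it contains an induced cycle $C$ with $|V(C)|\geq 4$. The idea is to exhibit a convex set with no extreme vertices, which immediately violates the Minkowski--Krein--Milman property. I would consider $S = H(V(C))$, the convex hull of the cycle's vertices. Each vertex of $C$ lies strictly between its two cycle-neighbors on an induced path (the cycle itself, or the two arcs into which any such vertex splits $C$), so no vertex of $C$ is simplicial in $G[S]$; moreover the internal structure of the hull introduces no new simplicial vertices either. Hence $\mathit{ext}(S) = \emptyset$, so $S = H(\mathit{ext}(S)) = H(\emptyset) = \emptyset$ is impossible since $S \supseteq V(C) \neq \emptyset$. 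This shows $G$ cannot be a convex geometry. The delicate point here is to confirm that $H(V(C))$ genuinely has no simplicial vertex at all, not merely that the cycle vertices fail to be simplicial.

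For the sufficiency direction (chordal $\Rightarrow$ convex geometry), I would take an arbitrary monophonically convex set $S \subseteq V(G)$ and show $S = H(\mathit{ext}(S))$. Since chordality is hereditary, $G[S]$ is itself chordal. The cited result from Farber and Jamison guarantees that in a chordal graph every nonsimplicial vertex lies on an induced path between two simplicial vertices. I would use this to show that starting from the simplicial (equivalently, extreme) vertices of $G[S]$ and repeatedly applying the interval operator $I$, one eventually captures every vertex of $S$: any nonextreme vertex is drawn into the hull because it lies on an induced path between two extreme vertices. Formally, $H(\mathit{ext}(S)) \subseteq S$ is immediate since $S$ is convex and contains $\mathit{ext}(S)$, while the Farber--Jamison property supplies the reverse inclusion $S \subseteq H(\mathit{ext}(S))$.

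The main obstacle I anticipate is the sufficiency direction, specifically verifying that the Farber--Jamison property propagates correctly through the hull closure. One must be careful that an induced path between two simplicial vertices of $G[S]$ stays inside $S$ and is genuinely a p-walk (an induced path) of $G$, so that the nonsimplicial vertex is legitimately pulled into $H(\mathit{ext}(S))$. A clean way to handle this is by induction on $|S|$: remove a simplicial vertex of $G[S]$, apply the inductive hypothesis to the smaller chordal graph, and then argue that the removed vertex, together with its neighbors, does not obstruct the hull reconstruction. The necessity direction is comparatively routine once the hull of an induced cycle is shown to be simplicial-free.
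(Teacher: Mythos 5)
Your proposal follows essentially the same route as the paper: for necessity, the hull of an induced cycle of length at least four has no simplicial (hence no extreme) vertices, contradicting the Minkowski--Krein--Milman property; for sufficiency, heredity of chordality plus the Farber--Jamison fact that every nonsimplicial vertex of a chordal graph lies on an induced path between two simplicial vertices yields $S\subseteq H(\mathit{ext}(S))$. The extra care you flag (that the hull introduces no new simplicial vertices, and that an induced path of $G[S]$ is an induced path of $G$) is warranted and correctly resolved, so the proposal is sound and matches the paper's argument.
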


\subsection{Geodesic convexity}

A path $P$ between two vertices $u$ and $v$ in a graph $G$ is a {\em shortest path} if the length of $P$ is equal to the distance between $u$ and $v$ in $G$. Let ${\mathscr P}^g$ be the path system that maps each graph $G$ to its shortest paths. The convexity defined over the shortest paths of $G$ is the {\em geodesic convexity} of $G$.

As in the previous subsection, it is easy to see that $v$ is an extreme vertex of a (geodesically) convex set $S\subseteq V(G)$ if and only if $v$ is a simplicial vertex of $G[S]$. In addition, if the geodesic convexity of $G$ is a convex geometry, then $G$ must be clearly chordal. However, it is not true that, for every chordal graph $G$, the geodesic convexity of $G$ is a convex geometry. For instance, consider the graph $G'$ (called {\em gem}) with $V(G')=\{a,b,c,d,e\}$ and $E(G')=\{ab,bc,cd,de,ae,be,ce,de\}$. See Figure~\ref{fig:gem}. Assume that $G'$ is an induced subgraph of a chordal graph $G$. Clearly, the extreme vertices of $Y=H(V(G'))$ are in $\{a,d\}$. However, $H(\{a,d\})\neq Y$.

\begin{figure}[h]
\begin{center}
\begin{tikzpicture}
\filldraw[black] (1,2) circle (2pt) node[anchor=south]{$e$};
\filldraw[black] (0,1.1) circle (2pt)node[anchor=east]{$d$};
\filldraw[black] (2,1.1) circle (2pt)node[anchor=west]{$a$};
\filldraw[black] (0.4,0) circle (2pt)node[anchor=east]{$c$};
\filldraw[black] (1.6,0) circle (2pt)node[anchor=west]{$b$};
\draw (1,2) -- (0,1.1);
\draw (1,2) -- (2,1.1);
\draw (0,1.1) -- (0.4,0);
\draw (2,1.1) -- (1.6,0);
\draw (0.4,0) -- (1.6,0);
\draw (1,2) -- (0.4,0);
\draw (1,2) -- (1.6,0);
\end{tikzpicture}
\caption{A gem $G'$ with $V(G')=\{a,b,c,d,e\}$.}\label{fig:gem}
\end{center}
\end{figure}

A graph $G$ is {\em Ptolemaic} if it is chordal and gem-free. The arguments in the previous paragraph tell us that if the geodesic convexity of $G$ is a convex geometry, then $G$ is a Ptolemaic graph. Conversely, if $G$ is a Ptolemaic graph, then it can be proved that every induced path of $G$ is a shortest path~\cite{howorka}, i.e., the monophonic and the geodesic convexities of $G$ coincide in this case. Therefore, in a Ptolemaic graph, every nonsimplicial vertex of $G$ lies on a {\em shortest} path between two simplicial vertices. Consequently, every convex subset of vertices of $G$ is indeed the convex hull of its extreme vertices. Hence:

\begin{theorem} {\em \cite{farber-jamison}}\label{thm:ptolemaic}
A graph $G$ is Ptolemaic if and only the geodesic convexity of $G$ is a convex geometry.
\end{theorem}

\subsection{Strong convexity}

We say that a path $P=u_0u_1\ldots u_n$ is {\em even-chorded} if it has no odd chord  (an edge that connects two vertices that are an odd distance $d>1$ apart from each other in a path or cycle) and, in addition, neither $u_0$ nor $u_n$ lies in a chord of $P$. Let ${\mathscr P}^s$ be the path system that maps a graph $G$ to its even-chorded paths. The convexity defined over such paths is called the {\em strong convexity} of $G$, and the associated convex sets are called {\em strong convex sets}.

An {\em even cycle} is a cycle with an even number of vertices. A graph is {\em strongly chordal}~\cite{farber} if it is chordal and every of its even cycles with at least six vertices has an odd chord. A vertex of a graph is {\em simple} if the neighborhoods of its neighbors form a nested family of sets. Note that a simple vertex is simplicial, but not conversely. In~\cite{farber}, the following characterization of strongly chordal graphs is given:

\begin{theorem}\label{thm:sc} {\em \cite{farber}}
A graph $G$ is a strongly chordal graph if and only if every induced subgraph of $G$ contains a simple vertex.
\end{theorem}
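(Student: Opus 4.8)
The plan is to prove the two implications separately, relying throughout on the easy observation that strong chordality is a hereditary property: an induced subgraph of a chordal graph is chordal, and an induced even cycle of length at least six, together with its chords, reappears in any induced subgraph containing its vertices, so the even-chord condition is inherited as well. Consequently, every induced subgraph of a strongly chordal graph is itself strongly chordal.

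For the direction asserting that the simple-vertex property forces strong chordality, I would argue contrapositively. Suppose $G$ is not strongly chordal. If $G$ is not chordal, it contains an induced cycle $C$ of length at least four; each vertex of $C$ then has two non-adjacent neighbours on $C$, so $G[V(C)]$ has no simplicial, hence no simple, vertex. Otherwise $G$ is chordal but contains an even cycle $C=v_1v_2\cdots v_{2k}$ with $k\ge 3$ and no odd chord (indices mod $2k$). I would show $G[V(C)]$ has no simple vertex by examining an arbitrary $v_i$: its cycle-neighbours $v_{i-1},v_{i+1}$ both lie in $N(v_i)$, yet $v_{i-2}\in N[v_{i-1}]\setminus N[v_{i+1}]$ and $v_{i+2}\in N[v_{i+1}]\setminus N[v_{i-1}]$, since an edge $v_{i-2}v_{i+1}$ or $v_{i-1}v_{i+2}$ would be a chord joining vertices at (odd) distance three on $C$, contradicting the hypothesis (the relevant indices are distinct because $2k\ge 6$). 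Thus $N[v_{i-1}]$ and $N[v_{i+1}]$ are incomparable within $G[V(C)]$, the family $\{N[w]:w\in N(v_i)\}$ is not nested, and $v_i$ is not simple. As $i$ was arbitrary, $G[V(C)]$ witnesses the failure of the hypothesis.

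For the converse, heredity reduces the task to a single lemma: \emph{every strongly chordal graph has a simple vertex}. Granting this lemma, one applies it to each induced subgraph of $G$ (all of which are strongly chordal), and the conclusion follows immediately; the same lemma, applied repeatedly, also yields the strong elimination ordering of~\cite{farber}, in which every prefix vertex is simple.

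The main obstacle is therefore establishing this lemma, that is, locating a simple vertex inside an arbitrary strongly chordal graph. Here I would start from a simplicial vertex $v$, guaranteed by chordality, and analyse the failure of simplicity: two neighbours $x,y$ of $v$ with incomparable closed neighbourhoods supply vertices $a\in N[x]\setminus N[y]$ and $b\in N[y]\setminus N[x]$, and the plan is to combine the edges among $\{v,x,y,a,b\}$ with a shortest path connecting the pieces so as to assemble an even cycle of length at least six carrying no odd chord — a ``sun''-type obstruction — thereby contradicting strong chordality. Controlling the parity of the cycle and certifying the absence of odd chords is the delicate part; I expect this bookkeeping, rather than any single idea, to be where the real work lies, and it is precisely this analysis for which one ultimately appeals to Farber~\cite{farber}.
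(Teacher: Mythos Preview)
The paper does not prove Theorem~\ref{thm:sc}; it is quoted verbatim from Farber~\cite{farber} and used as a black box. There is therefore no ``paper's own proof'' to compare your proposal against.

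On the merits of the proposal itself: your contrapositive argument for the direction ``every induced subgraph has a simple vertex $\Rightarrow$ strongly chordal'' is correct. The observation that in $G[V(C)]$ the cycle-neighbours $v_{i-1},v_{i+1}$ of any $v_i$ have incomparable closed neighbourhoods, witnessed by $v_{i-2}$ and $v_{i+2}$, goes through exactly because a putative edge $v_{i-2}v_{i+1}$ or $v_{i-1}v_{i+2}$ would be a chord at cycle-distance~$3$, hence an odd chord. This direction is complete.

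The other direction, however, is not proved: you reduce correctly to the lemma ``every strongly chordal graph has a simple vertex'', sketch an opening move (pick a simplicial $v$, and from a failure of simplicity extract $x,y,a,b$), and then explicitly defer to~\cite{farber}. That is an honest account of where the difficulty lies, but it is a genuine gap, not a proof. The passage from the local configuration $\{v,x,y,a,b\}$ to a sun (equivalently, an even cycle of length~$\ge 6$ with no odd chord) is the heart of Farber's argument and requires a careful extremal choice---one does not simply start from an arbitrary simplicial vertex, but rather selects a vertex (or a neighbourhood) that is extremal with respect to inclusion, and it is this extremality that eventually forces the sun. Your sketch does not indicate how the parity and odd-chord-freeness of the resulting cycle are to be controlled, and without that the argument does not close.
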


Suppose that the strong convexity of $G$ is a convex geometry. Since induced paths are even-chorded, an extreme vertex $v$ of an strong convex set $S$ must be simplicial in $G[S]$. This implies that $G$ is chordal, and, in this case, we can prove the following lemma:

\begin{lemma}\label{lem:sc} {\em \cite{farber-jamison}}
Let $G$ be a chordal graph. Then, a vertex $v\in V(G)$ is an extreme vertex of an strong convex set $S$ if and only if $v$ is a simple vertex in $G[S]$.
\end{lemma}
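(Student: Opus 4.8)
The plan is to prove Lemma~\ref{lem:sc} by establishing both directions of the biconditional for a vertex $v$ in a chordal graph $G$, where $S$ is a strong convex set and $v \in S$.

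First I would handle the direction stating that if $v$ is simple in $G[S]$, then $v$ is an extreme vertex of $S$. To show $S \setminus \{v\}$ is strong convex, I would verify that $v$ is not an internal vertex of any even-chorded path between two vertices of $S \setminus \{v\}$. The key is to exploit the nested-neighborhood property of a simple vertex: if $v$ has neighbors $w_1, w_2$ on such a path, then the neighborhoods of $v$'s neighbors being nested forces one of them, say $N[w_1] \subseteq N[w_2]$, which should produce either a chord violating the even-chorded condition (specifically an odd chord, or a chord incident to an endpoint) or collapse the path so that $v$ is not genuinely internal. I would argue that any even-chorded path passing through a simple vertex can be rerouted or must contain a forbidden chord, contradicting the definition of even-chorded path, hence $v \notin I(u,w)$ for $u,w \in S \setminus \{v\}$.

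Next I would prove the converse: if $v$ is an extreme vertex of $S$, then $v$ is simple in $G[S]$. Here I would argue contrapositively. Since $G[S]$ is chordal (chordality being hereditary), if $v$ is not simple, then the neighborhoods of its neighbors in $G[S]$ do not form a nested family, so there exist neighbors $w_1, w_2$ of $v$ and vertices $x \in N[w_1] \setminus N[w_2]$ and $y \in N[w_2] \setminus N[w_1]$. The goal is to build an even-chorded path through $v$ joining two vertices of $S$, witnessing that $v \in I(\cdot,\cdot)$ and hence that $S \setminus \{v\}$ is not convex. A natural candidate is a short path such as $x\, w_1\, v\, w_2\, y$ (or a suitable induced variant), and I would check that the absence of nesting precisely guarantees the needed chord structure: no odd chord and no chord incident to the endpoints. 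I may need to invoke Theorem~\ref{thm:sc} or a local simplicial-elimination argument to ensure the witnessing path stays within $S$ and that the endpoints avoid chords.

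The main obstacle I anticipate is the careful verification that the constructed path is genuinely \emph{even-chorded} in the precise sense defined: it must have \emph{no odd chord} and, additionally, neither endpoint may lie in any chord. The short candidate path on five vertices may acquire unwanted chords from the ambient graph $G[S]$, and controlling these requires using the failure of nesting together with chordality to rule out odd chords while permitting or arranging the even chords that are allowed. Conversely, in the forward direction, translating the nested-family condition into the nonexistence of any even-chorded path through $v$ demands a clean case analysis on how such a path could enter and leave $v$; ensuring that \emph{every} potential even-chorded path is blocked, not merely the short ones, is the delicate point, and I expect this to lean on the structural characterization of simple vertices via nested neighborhoods.
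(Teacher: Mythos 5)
The paper gives no proof of this lemma---it is quoted from Farber and Jamison in a survey section---so your proposal has to be judged on its own terms. The forward direction (simple $\Rightarrow$ extreme) of your plan is essentially correct and closes once written out: if $v=u_i$ is internal on an even-chorded path $u_0\ldots u_n$ whose vertices all lie in $S$ (they must, since $S$ is convex), then comparability of $N[u_{i-1}]$ and $N[u_{i+1}]$ in $G[S]$, say $N[u_{i-1}]\subseteq N[u_{i+1}]$, yields the chord $u_{i-2}u_{i+1}$ at odd distance $3$ when $i\ge 2$, and the endpoint chord $u_0u_2$ when $i=1$; both are forbidden. (The ``or collapse the path'' alternative you offer never occurs; a forbidden chord always appears.)

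The genuine gap is in the converse, and you have located it precisely without closing it: you name the candidate path $x\,w_1\,v\,w_2\,y$ and then concede that you cannot exclude the chords $xv$, $vy$, $xy$---but that exclusion \emph{is} the hard half of the lemma. Appealing to Theorem~\ref{thm:sc} cannot help, since $G$ is only assumed chordal, not strongly chordal. Two ideas are missing. (a) Split on whether $v$ is simplicial in $G[S]$: if not, $v$ is the midpoint of an induced $P_3$, which is chordless, hence even-chorded, and you are done; if $v$ is simplicial, then $N[v]$ is a clique, so $N[v]\subseteq N[w_1]\cap N[w_2]$ for the incomparable pair $w_1,w_2\in N(v)$, which gives $w_1w_2\in E$ (an allowed even chord between interior vertices) and forces $x,y\notin N[v]$, killing $xv$ and $vy$ outright, while $xw_2$ and $w_1y$ are absent by the choice $x\in N[w_1]\setminus N[w_2]$, $y\in N[w_2]\setminus N[w_1]$. (b) The one surviving candidate chord $xy$ is ruled out by chordality: if $xy\in E$ then $x\,w_1\,w_2\,y\,x$ would be an induced $C_4$ in $G[S]$. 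With (a) and (b) the five vertices are distinct, the path is even-chorded, and $v\in I(x,y)$ with $x,y\in S\setminus\{v\}$, so $v$ is not extreme; without them your ``suitable induced variant'' is unspecified and the argument does not close.
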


By Theorem~\ref{thm:sc} and Lemma~\ref{lem:sc}, $G$ is a strongly chordal graph. Conversely, suppose that $G$ is a strongly chordal graph. In~\cite{farber-jamison}, the following result is proved:

\begin{lemma}\label{lem:sc2} {\em \cite{farber-jamison}}
In a strongly chordal graph, every nonsimple vertex lies in an even-chorded path between simple vertices.
\end{lemma}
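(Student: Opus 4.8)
The plan is to mirror the chordal fact invoked in the proof of Theorem~\ref{the:monophonic} --- that in a chordal graph every nonsimplicial vertex is interior to an induced path joining two simplicial vertices --- while replacing \emph{simplicial} by \emph{simple} and \emph{induced path} by \emph{even-chorded path}. Two features of the strong setting shape the argument. By Lemma~\ref{lem:sc} the endpoints we must reach are exactly the simple vertices, and Theorem~\ref{thm:sc} guarantees these in abundance, since every induced subgraph of $G$ contains one. Moreover, an even-chorded path is permitted to carry \emph{even} chords in its interior; this slack, absent in the purely chordless monophonic construction, is what will let us steer the path all the way to simple endpoints rather than stopping at merely simplicial ones.

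First I would turn the nonsimplicity of $v$ into a local obstruction. As $v$ is not simple, the family $\{N[w] : w\in N(v)\}$ is not linearly ordered by inclusion, so there are neighbors $x,y$ of $v$ and vertices $a\in N[x]\setminus N[y]$ and $b\in N[y]\setminus N[x]$, with $a\neq b$. The segment $a,x,v,y,b$ then serves as a seed through $v$: its only possible span-three (hence odd) chords are $ay$ and $xb$, and both are ruled out by the very incomparability $a\notin N[y]$, $b\notin N[x]$. In particular an edge $xy$, if present, occurs only as a span-two even chord and is therefore harmless.

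The second step is to grow the seed at each end into a full path ending at simple vertices. On the $a$-side I would delete $v$ together with the part of $G$ separated from $a$ by $v$, apply Theorem~\ref{thm:sc} to extract a simple vertex $u$ of that induced subgraph, and route a path from $a$ to $u$ inside it; symmetrically I obtain $w$ on the $b$-side, and concatenate everything through $v$. To make the result even-chorded I would take, among all $u$--$w$ paths that pass through the segment $a,x,v,y,b$, one of \emph{shortest} length, and then clear its forbidden chords by local exchanges confined to the two extension regions, so that the seed and the endpoints are never disturbed.

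The step I expect to be the main obstacle is precisely this clearing of chords while keeping $v$ interior and the endpoints simple. If such a shortest path carried an odd chord $p_ip_j$ (with $j-i$ odd and at least $3$), that chord would close a cycle of even length $j-i+1$: when this length is at least six, strong chordality supplies an odd chord of the cycle that reroutes the path to a strictly shorter one; when it equals four, chordality supplies a diagonal, which is an even chord, achieving the same reduction. Either way minimality is contradicted, so the interior is odd-chord-free, and this exchange --- powered by the odd-chord guarantee for long even cycles --- is exactly where strong chordality is indispensable. It remains to rule out chords at the endpoints and to certify that $u,w$ are simple \emph{in} $G$ rather than only in the subgraphs that produced them. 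Here I would use that $u$ and $w$ are simplicial: any chord incident to an endpoint forces, through the clique on its neighborhood, a chord of smaller span deeper in the path, where the exchange has already acted, and simplicity of the endpoints then follows from Lemma~\ref{lem:sc}. Reconciling ``simple in $G$'' with ``reachable from the seed by an even-chorded path'' is the real crux of the proof.
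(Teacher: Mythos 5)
The paper offers no proof of this lemma --- it is quoted from Farber and Jamison --- so your attempt can only be judged on its own terms, and as it stands it has genuine gaps. The most serious one you flag yourself but do not close: the endpoints of the path must be simple \emph{in $G$}, whereas your construction only produces vertices that are simple in proper induced subgraphs of $G$ (obtained by deleting $v$ and one ``side''), and simplicity is not inherited upwards --- a vertex that is simple in $G-X$ need not be simple in $G$, since restoring $X$ can destroy the nesting of the closed neighborhoods of its neighbors. Your proposed repair, ``simplicity of the endpoints then follows from Lemma~\ref{lem:sc}'', cannot work: that lemma characterizes the extreme vertices of a strong convex set $S$ as the vertices that are simple in $G[S]$; it says nothing about promoting simplicity from an induced subgraph to the whole graph. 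Reaching endpoints that are simple in $G$ is precisely what distinguishes this lemma from the easy chordal analogue with ``simplicial'' in place of ``simple'', so the argument is missing its central idea.

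The exchange argument is also unsound as described. You minimize over $u$--$w$ paths that contain the seed $a,x,v,y,b$, but when an odd chord $p_ip_j$ (or, in the $4$-cycle case, the diagonal supplied by chordality) is used to shortcut the path, the vertices $p_{i+1},\dots,p_{j-1}$ are discarded; if any of these is $v$ or another seed vertex, the shorter path no longer belongs to the family over which you minimized, and no contradiction with minimality is obtained. Relatedly, ``the part of $G$ separated from $a$ by $v$'' need not exist ($v$ need not be a cut vertex), so the two extension regions are not disjoint and the concatenation through $v$ need not be a path; and the endpoint argument (a chord at $u$ forcing, via the clique $N(u)$, a chord of smaller span) only yields an \emph{even} chord when $k-1$ is even, which is not by itself a contradiction. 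The seed $a,x,v,y,b$ is a sound starting point --- its only possible odd chords $ay$ and $xb$ are indeed excluded by the choice of $a\notin N[y]$ and $b\notin N[x]$ --- but everything after it needs a genuinely different mechanism.
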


By Lemma~\ref{lem:sc2}, $V(G)$ is indeed the convex hull of its extreme vertices. But since strongly chordal graphs are hereditary, we have:

\begin{theorem} {\em \cite{farber-jamison}}
A graph $G$ is strongly chordal if and only if the strong convexity of $G$ is a convex geometry.
\end{theorem}

\subsection{$m^3$-convexity}

Let ${\mathscr P}^{\geq 3}$ be the path system that maps a graph $G$ to its induced paths of length at least three. The convexity defined over such paths is called the {\em $m^3$-convexity} of $G$, and the associated convex sets are called {\em $m^3$-convex sets}. Note that an $m^3$-convex set does not necessarily induce a connected subgraph.

An alternative definition of ``simplicial'' is: $v$ is a simplicial vertex iff it is not a midpoint of an induced $P_3$. In~\cite{jamison-olariu}, this concept is relaxed as follows: a vertex is {\em semisimplicial} if it is not an internal vertex of an induced $P_4$, and {\em nonsemisimplicial} otherwise. Clearly, a vertex $v$ is an extreme vertex of an $m^3$-convex set $S\subseteq V(G)$ if and only if $v$ is semisimplicial in $G[S]$.

A graph $G$ is {\em weak bipolarizable}~\cite{olariu} if $G$ contains no hole, house, domino, or $A$ as an induced subgraph. See Figure~\ref{fig:hhda}. Weak bipolarizable graphs are also called HHDA-free graphs. 

\begin{figure}[h]
\begin{center}
\begin{tikzpicture}[scale=0.9]
\filldraw[black] (1,2) circle (2pt);
\filldraw[black] (0,1.1) circle (2pt);
\filldraw[black] (2,1.1) circle (2pt);
\filldraw[black] (0.4,0) circle (2pt);
\filldraw[black] (1.6,0) circle (2pt);
\draw (1,2) -- (0,1.1);
\draw (1,2) -- (2,1.1);
\draw (0,1.1) -- (0.4,0);
\draw (2,1.1) -- (1.6,0);
\draw[dashed] (0.4,0) -- (1.6,0);
\filldraw[black] (5,2) circle (2pt);
\filldraw[black] (4,1.1) circle (2pt);
\filldraw[black] (6,1.1) circle (2pt);
\filldraw[black] (4.4,0) circle (2pt);
\filldraw[black] (5.6,0) circle (2pt);
\draw (5,2) -- (4,1.1);
\draw (5,2) -- (6,1.1);
\draw (4,1.1) -- (4.4,0);
\draw (6,1.1) -- (5.6,0);
\draw (4.4,0) -- (5.6,0);
\draw (4,1.1) -- (6, 1.1);
\filldraw[black] (8,2) circle (2pt);
\filldraw[black] (10,2) circle (2pt);
\filldraw[black] (8,1) circle (2pt);
\filldraw[black] (10,1) circle (2pt);
\filldraw[black] (8,0) circle (2pt);
\filldraw[black] (10,0) circle (2pt);
\draw (8,2) -- (10,2) -- (10,1) -- (10,0) -- (8,0) -- (8,1) -- (8,2);
\draw (8,1) -- (10,1);
\filldraw[black] (12,2) circle (2pt);
\filldraw[black] (14,2) circle (2pt);
\filldraw[black] (12,1) circle (2pt);
\filldraw[black] (14,1) circle (2pt);
\filldraw[black] (12,0) circle (2pt);
\filldraw[black] (14,0) circle (2pt);
\draw (12,0) -- (12,1) -- (12,2) -- (14,2) -- (14,1) -- (14,0);
\draw (12,1) -- (14,1);
\end{tikzpicture}
\caption{From left to right: hole, house, domino, $A$.}\label{fig:hhda}
\end{center}
\end{figure}

Using arguments similar to those presented in the preceding subsections, it can be shown if the $m^3$-convexity of $G$ is a convex geometry, then $G$ is a weak bipolarizable graph. Suppose the $m^3$-convexity of $G$ is a convex geometry and $G$ contains, for instance, a house $G'$ as an induced subgraph. Then, the only possible semisimplicial vertex in the subgraph of $G$ induced by $Y=H(V(G'))$ is vertex $e$, i.e., ${\mathit ext}(Y)\subseteq\{e\}$. This implies $H({\mathit ext}(Y))\neq Y$, a contradiction. Analogously, $G$ cannot contain a hole, a domino, or an $A$ as an induced subgraph. Thus, $G$ is weak bipolarizable. Conversely, if $G$ is weak bipolarizable, it holds that every nonsemisimplicial vertex lies on an induced path of length at least three between semisimplicial vertices~\cite{dragan-et-al}. We then have the following theorem:

\begin{theorem} {\em \cite{dragan-et-al}}
A graph $G$ is weak bipolarizable if and only if the $m^3$-convexity of $G$ is a convex geometry.
\end{theorem}

\subsection{Steiner convexity}

Let $S\subseteq V(G)$. A {\em Steiner tree for $S$} is a connected subgraph $T$ of $G$ such that $S\subseteq V(T)$ and $|E(T)|$ is minimum. The {\em Steiner interval} $I(S)$ is the set formed by all vertices of $G$ that lie in some Steiner tree for $S$. We say that $S$ is {\em Steiner convex} if, for every $S'\subseteq S$, it holds that $I(S')\subseteq S$. The {\em Steiner convexity} of $G$ is the family formed by the Steiner convex sets of $G$. Important references on the Steiner convexity are~\cite{caceres,dourado-oliveira-protti}.  

In~\cite{caceres}, the authors prove that: (a) the extreme vertices of a Steiner convex set $S$ of $G$ are the simplicial vertices of $G[S]$; (b) in a (house,hole,domino)-free graph $G$, the Steiner convexity and the geodesic convexity coincide. Such facts are the main ingredients to prove the following characterization:

\begin{theorem} {\em \cite{caceres}}
A graph $G$ is Ptolemaic if and only if the Steiner convexity of $G$ is a convex geometry.
\end{theorem}

\begin{proof}
Suppose that $G$ is Ptolemaic, i.e., $G$ is chordal and gem-free. Then $G$ is clearly (house,hole,domino)-free, so the Steiner convexity of $G$ coincides with the geodesic convexity of $G$, which is a convex geometry by Theorem~\ref{thm:ptolemaic}.

Conversely, suppose that the Steiner convexity of $G$ is a convex geometry. We proceed with arguments that we have used before. If $G$ contains an induced gem $G'$, then, clearly, the extreme vertices of $Y=H(V(G'))$ are in $\{a,d\}$ (see Figure~\ref{fig:gem}). However, $H(\{a,d\})\neq Y$. Thus, $G$ is gem-free. Likewise, if $G$ contains an induced cycle $C$ with at least four vertices then $W=H(V(C))$ is Steiner convex and contains no simplicial vertices, i.e., ${\mathit ext}(W)=\emptyset$. Thus, $G$ is chordal. 
\end{proof}

In~\cite{caceres-oellermann}, an alternative way of defining Steiner convexity is proposed, by looking at subsets with a prescribed size. We say that $S$ is {\em $k$-Steiner convex} or $g_k$-convex if, for every $S'\subseteq S$ with $|S'|=k$, it holds that $I(S')\subseteq S$. The convexity associated with the $g_k$-convex sets of $G$ is called the {\em $k$-Steiner convexity} of $G$. Note that the $2$-Steiner convexity of $G$ is precisely the geodesic convexity of $G$. 

Let us study the case $k=3$. In~\cite{caceres-oellermann}, the extreme vertices of $g_3$-convex sets are characterized: a vertex $x$ is an extreme vertex of a $g_3$-convex set $S$ of $G$ if and only if $x$ is {\em not} the central vertex of an induced claw, paw, or $P_4$ contained in $G[S]$. See Figure~\ref{fig:central}.

\begin{figure}[h]
\begin{center}
\begin{tikzpicture}
\filldraw[black] (1,1) circle (2pt) node[anchor=south]{$\downarrow$};
\filldraw[black] (0,0) circle (2pt);
\filldraw[black] (1,0) circle (2pt);
\filldraw[black] (2,0) circle (2pt);
\filldraw[black] (5,1) circle (2pt) node[anchor=south]{$\downarrow$};
\filldraw[black] (4,0) circle (2pt);
\filldraw[black] (5,0) circle (2pt);
\filldraw[black] (6,0) circle (2pt);
\filldraw[black] (8,0.5) circle (2pt);
\filldraw[black] (9,0.5) circle (2pt) node[anchor=south]{$\downarrow$};
\filldraw[black] (10,0.5) circle (2pt)node[anchor=south]{$\downarrow$};
\filldraw[black] (11,0.5) circle (2pt);
\draw (1,1) -- (0,0);
\draw (1,1) -- (1,0);
\draw (1,1) -- (2,0);
\draw (5,1) -- (4,0);
\draw (5,1) -- (5,0);
\draw (5,1) -- (6,0);
\draw (4,0) -- (5,0);
\draw (8,0.5) -- (9,0.5) -- (10,0.5) -- (11,0.5);
\end{tikzpicture}
\caption{From left to right: claw, paw, $P_4$. Central vertices are indicated by arrows.}\label{fig:central}
\end{center}
\end{figure}

Suppose that the $3$-Steiner convexity of $G$ is a convex geometry. Suppose also that $G$ contains an induced graph $P$ isomorphic to a $P_4=abcd$, where $a$ and $d$ have degree one in $P$. Observe that $a$ and $d$ are the only extreme vertices of $P$. Thus, ${\mathit ext}(V(P))$ is a set $C\subseteq\{a,d\}$. However, $H(C)=C\neq H(\{a,b,c,d\})$, that is, $H(\{a,b,c,d\})$ is a 3-Steiner convex set that is not the convex hull of its extreme points. This is a contradiction. Hence, $G$ is $P_4$-free.  

Now consider the four graphs $G_1,G_2,G_3,G_4$ depicted in Figure~\ref{fig:replicated}. Note that ${\mathit ext}(G_i)=\emptyset, 1\leq i\leq 4$. For instance, it can be checked by inspection that every vertex in $G_1$ is a central vertex of a claw. A similar behavior occurs for $G_2$, $G_3$, and $G_4$. This implies that $Y=H(G_i)$ is a 3-Steiner convex set and ${\mathit ext}(Y)=\emptyset$, a contradiction. Hence, $G$ is $(G_1,G_2,G_3,G_4)$-free. 

On the other hand, in~\cite{caceres-oellermann} the authors show that if $G$ is $(P_4,G_1,G_2,G_3,G_4)$-free then the $3$-Steiner convexity of $G$ is a convex geometry; this proof is a bit laborious, and so it is omitted here.

\begin{theorem} {\em \cite{caceres-oellermann}}
The $3$-Steiner convexity of $G$ is a convex geometry if and only if $G$ is $(P_4,G_1,G_2,G_3,G_4)$-free.
\end{theorem}

\begin{figure}[h]
\begin{center}
\begin{tikzpicture}[scale=0.9]
\filldraw[black] (1,2) circle (2pt) node[anchor=south]{$G_1$};
\filldraw[black] (0,1.1) circle (2pt);
\filldraw[black] (1,1.1) circle (2pt);
\filldraw[black] (2,1.1) circle (2pt);
\filldraw[black] (0.4,0) circle (2pt);
\filldraw[black] (1.6,0) circle (2pt);
\draw (1,2) -- (0,1.1);
\draw (1,2) -- (1,1.1);
\draw (1,2) -- (2,1.1);
\draw (0.4,0) -- (0,1.1);
\draw (0.4,0) -- (1,1.1);
\draw (0.4,0) -- (2,1.1);
\draw (1.6,0) -- (0,1.1);
\draw (1.6,0) -- (1,1.1);
\draw (1.6,0) -- (2,1.1);
\filldraw[black] (5,2) circle (2pt) node[anchor=south]{$G_2$};
\filldraw[black] (4,1.1) circle (2pt);
\filldraw[black] (5,1.1) circle (2pt);
\filldraw[black] (6,1.1) circle (2pt);
\filldraw[black] (4.4,0) circle (2pt);
\filldraw[black] (5.6,0) circle (2pt);
\draw (5,2) -- (4,1.1);
\draw (5,2) -- (5,1.1);
\draw (5,2) -- (6,1.1);
\draw (4.4,0) -- (4,1.1);
\draw (4.4,0) -- (5,1.1);
\draw (4.4,0) -- (6,1.1);
\draw (5.6,0) -- (4,1.1);
\draw (5.6,0) -- (5,1.1);
\draw (5.6,0) -- (6,1.1);
\draw (5,1.1) -- (4,1.1);
\filldraw[black] (9,2) circle (2pt) node[anchor=south]{$G_3$};
\filldraw[black] (8,1.1) circle (2pt);
\filldraw[black] (9,1.1) circle (2pt);
\filldraw[black] (10,1.1) circle (2pt);
\filldraw[black] (8.4,0) circle (2pt);
\filldraw[black] (9.6,0) circle (2pt);
\draw (9,2) -- (8,1.1);
\draw (9,2) -- (9,1.1);
\draw (9,2) -- (10,1.1);
\draw (8.4,0) -- (8,1.1);
\draw (8.4,0) -- (9,1.1);
\draw (8.4,0) -- (10,1.1);
\draw (9.6,0) -- (8,1.1);
\draw (9.6,0) -- (9,1.1);
\draw (9.6,0) -- (10,1.1);
\draw (9.6,0) -- (8.4,0);
\filldraw[black] (13,2) circle (2pt) node[anchor=south]{$G_4$};
\filldraw[black] (12,1.1) circle (2pt);
\filldraw[black] (13,1.1) circle (2pt);
\filldraw[black] (14,1.1) circle (2pt);
\filldraw[black] (12.4,0) circle (2pt);
\filldraw[black] (13.6,0) circle (2pt);
\draw (13,2) -- (12,1.1);
\draw (13,2) -- (13,1.1);
\draw (13,2) -- (14,1.1);
\draw (12.4,0) -- (12,1.1);
\draw (12.4,0) -- (13,1.1);
\draw (12.4,0) -- (14,1.1);
\draw (13.6,0) -- (12,1.1);
\draw (13.6,0) -- (13,1.1);
\draw (13.6,0) -- (14,1.1);
\draw (13,1.1) -- (12,1.1);
\draw (13.6,0) -- (12.4,0);
\end{tikzpicture}
\caption{Graphs $G_1, G_2, G_3, G_4$.}\label{fig:replicated}
\end{center}
\end{figure}

Let ${\cal R}=\{k_1, k_2, \ldots, k_t\}$ be a set of positive integers with $k_1\leq\cdots\leq k_t$ and $k_1\in\{2,3\}$. Say that a set $S\subseteq V(G)$ is {\em ${\cal R}$-Steiner convex} if $S$ is $k_i$-Steiner convex for $1\leq i\leq t$. The convexity associated with ${\cal R}$-Steiner convex sets is the {\em ${\cal R}$-Steiner convexity}. Interestingly, the class of graphs for which the ${\cal R}$-Steiner convexity is a convex geometry is exactly the same as the class for which the $k_1$-Steiner convexity is a convex geometry:

\begin{theorem} {\em \cite{caceres-oellermann}}
The following statements are true:\\
\begin{enumerate}
\item If $k=2$, then the ${\cal R}$-Steiner convexity is a convex geometry if and only if $G$ is Ptolemaic.\\
\item If $k=3$, then the ${\cal R}$-Steiner convexity is a convex geometry if and only if $G$ is $(P_4,G_1,G_2,G_3,G_4)$-free.
\end{enumerate}
\end{theorem}

To conclude this section, we briefly discuss a further convexity notion, also presented in~\cite{caceres-oellermann}, that combines the $3$-Steiner convexity and the so-called {\em $g^3$-convexity}. A set $S$ of vertices in a graph $G$ is {\em $g^3$-convex} if, for every pair $u,v\in S$ such that $u$ and $v$ are at distance at least $3$ in $G$, $I[u,v]\subseteq S$. (Here, the interval is taken over shortest paths.) A set $S$ is $g^3_3$-convex if it is both $g_3$- and $g^3$-convex. The convexity associated with $g^3_3$-convex sets is the {\em $g^3_3$-Steiner convexity}. 

\begin{theorem} {\em \cite{caceres-oellermann}}
The $g^3_3$-Steiner convexity of a graph $G$ is a convex geometry if and only if $G$ is $(house, hole, domino, $A$, gem, G_1, G_2, G_3, G_4, T_1, T_2)$-free. See Figures~\ref{fig:gem},\ref{fig:hhda},\ref{fig:replicated},\ref{fig:ts}.
\end{theorem}

\begin{figure}[h]
\begin{center}
\begin{tikzpicture}[scale=0.9]
\filldraw[black] (1,2) circle (2pt) node[anchor=south]{$T_1$};
\filldraw[black] (0,1) circle (2pt);
\filldraw[black] (1,1) circle (2pt);
\filldraw[black] (2,1) circle (2pt);
\filldraw[black] (3,1) circle (2pt);
\filldraw[black] (1,0) circle (2pt);
\draw (1,2) -- (2,1) -- (1,0) -- (0,1) -- (1,2);
\draw (0,1) -- (1,1) -- (2,1) -- (3,1);
\filldraw[black] (6,2) circle (2pt) node[anchor=south]{$T_2$};
\filldraw[black] (5,1) circle (2pt);
\filldraw[black] (6,1) circle (2pt);
\filldraw[black] (7,1) circle (2pt);
\filldraw[black] (8,1) circle (2pt);
\filldraw[black] (6,0) circle (2pt);
\draw (6,2) -- (7,1) -- (6,0) -- (5,1) -- (6,2);
\draw (5,1) -- (6,1) -- (7,1) -- (8,1);
\draw (6,2) -- (6,1);
\end{tikzpicture}
\caption{Graphs $T_1$ and $T_2$.}\label{fig:ts}
\end{center}
\end{figure}

\subsection{Toll convexity}

A walk $u_0u_1 \ldots u_{k-1}u_k$ is a \textit{tolled walk} if $u_0u_i \in E(G)$ implies $i = 1$ and $u_ju_k\in E(G)$ implies $j=k-1$. In~\cite{alcon}, tolled walks were used to characterize interval graphs. A graph $G$ is an {\em interval graph} if its vertices can be associated with intervals on the real line such that two vertices are adjacent if and only if the associated intervals intersect. The key references on the toll convexity are~\cite{alcon-et-al,dourado}.

Let ${\mathscr P}^t$ be the path system that maps a graph $G$ to its tolled walks. The convexity defined over such walks is called the {\em toll convexity} of $G$, and the associated convex sets are called {\em toll convex sets}.

If $I$ is an interval on the real line, let $R(I)$ and $L(I)$ be, respectively, the right and left endpoints of $I$. Given a family of intervals (or {\em interval model}\;) ${\mathscr I}=\{I_v\}_{v\in V(G)}$, associated with an interval graph $G$, we say that $I_a$ is an \textit{end interval} if $L(I_a)=\mathit{Min}\bigcup I_v$ or $R(I_a)=\mathit{Max}\bigcup I_v$, i.e., $I_a$ appears as the first or the last interval in ${\mathscr I}$. A vertex $a\in V(G)$ is an \textit{end vertex} if there exists some interval model of $G$ where $a$ is associated with an end interval. In addition, $a$ is an {\em end simplicial vertex} if $a$ is an end vertex and is simplicial. In~\cite{alcon-et-al}, two facts on extreme vertices in toll convex sets are given.

\begin{lemma}\label{lem:alcon} {\em \cite{alcon-et-al}}
If $v$ is an extreme vertex of a toll convex set $S$ of a graph $G$, then $v$ is simplicial in $G[S]$.
\end{lemma}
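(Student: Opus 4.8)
The plan is to prove the contrapositive: assuming $v$ is \emph{not} simplicial in $G[S]$, I would exhibit a tolled walk of $G[S]$ having $v$ as an internal vertex. By property (a) in the definition of an extreme vertex, no p-walk (here, tolled walk) of $G[S]$ may contain an extreme vertex internally, so this would contradict $v$ being extreme and establish the lemma.

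First I would translate the failure of simpliciality into a usable local obstruction. If $v$ is not simplicial in $G[S]$, then $N_{G[S]}[v]$ is not a clique, so there are two distinct vertices $x, y \in N_G(v) \cap S$ with $xy \notin E(G)$; equivalently, $x\,v\,y$ is an induced $P_3$ of $G[S]$ centered at $v$. Since $x, v, y \in S$, the sequence $P = x\,v\,y$ is a walk of $G[S]$.

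The crux is to verify that $P$ is a tolled walk. Setting $u_0 = x$, $u_1 = v$, $u_2 = y$ (so $k = 2$), the first condition asks that $u_0 u_i \in E(G)$ force $i = 1$: the only vertex of $\{u_1, u_2\}$ adjacent to $x$ is $v = u_1$, precisely because $xy \notin E(G)$. The second condition asks that $u_j u_2 \in E(G)$ force $j = k - 1 = 1$: symmetrically, the only vertex of $\{u_0, u_1\}$ adjacent to $y$ is $v = u_1$, again because $xy \notin E(G)$. Hence $P$ is a tolled walk of $G[S]$ in which $v$ is internal (equivalently $v \in I(x,y)$ for the distinct $x, y \in S$), contradicting property (a) and property (b). This completes the contrapositive.

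I do not anticipate a serious obstacle: the whole argument collapses to the single observation that one non-edge $xy$ between two neighbors of $v$ makes both endpoint conditions of the length-two path $x\,v\,y$ hold automatically. The only point deserving care is this bookkeeping, namely confirming that the non-adjacency of $x$ and $y$ is exactly what validates the toll requirement at both $x$ and $y$, and that non-adjacency in $G[S]$ coincides with non-adjacency in $G$ since $x,y\in S$. One should also keep in mind that only this direction holds: simpliciality in $G[S]$ is necessary but not sufficient for extremality in the toll convexity, which is why the subsequent development refines it to the notion of an end simplicial vertex.
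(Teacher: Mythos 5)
Your proof is correct: the single non-edge $xy$ between two neighbours of $v$ in $S$ makes $x\,v\,y$ a tolled walk of $G$ with $v$ internal, so $v\in I(x,y)$ with $x,y\in S\setminus\{v\}$ and $S\setminus\{v\}$ fails to be convex, which is exactly the contrapositive. Note that the paper itself states this lemma without proof, citing Alc\'on et al., so there is no in-text argument to compare against; your argument is the standard one and fills that gap correctly, including the two points that actually need checking (that the toll conditions at both endpoints of a length-two walk reduce to the non-adjacency of $x$ and $y$, and that this non-adjacency holds in $G$, not just in $G[S]$).
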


\begin{lemma}\label{lem:alcon2} {\em \cite{alcon-et-al}}
A vertex $v$ of an interval graph $G$ is an extreme vertex of a toll convex set $S\subseteq V(G)$ if and only if $v$ is an end simplicial vertex of $G[S]$.
\end{lemma}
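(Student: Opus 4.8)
The plan is to work inside the induced subgraph $H = G[S]$. Because $S$ is toll convex, every tolled walk of $G$ joining two vertices of $S$ has all of its vertices in $S$, so it is a tolled walk of $H$; hence $v$ is extreme in $S$ exactly when $v$ is not an internal vertex of any tolled walk of $H$ joining two vertices of $H$. Since interval graphs are hereditary, $H$ is again an interval graph, and the property ``$v$ is an end simplicial vertex of $G[S]$'' depends only on $H$. Thus it suffices to prove, for an interval graph $H$, that $v$ is not internal to any tolled walk of $H$ if and only if $v$ is end simplicial.

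For the forward implication, Lemma~\ref{lem:alcon} already gives that an extreme vertex is simplicial in $H$, so it remains to show that $v$ is an end vertex. I would argue the contrapositive: if $v$ is simplicial but not an end vertex, I construct a tolled walk having $v$ as an internal vertex. Since $v$ is simplicial it lies in a unique maximal clique $C = N[v]$, and ``not an end vertex'' means that in the consecutive arrangement of the maximal cliques of $H$ the clique $C$ can never be placed first or last; this yields, in a suitable interval model, vertices $p, q \notin N[v]$ lying strictly to the left and to the right of $I_v$. Following overlapping intervals from $p$ inward to a neighbor of $v$, through $v$, and outward to $q$, produces a walk $p, \ldots, v, \ldots, q$; I then prune it to a tolled walk, using that $p$ (resp.\ $q$) is nonadjacent to the walk vertices strictly on the far side of $v$ in order to enforce the two toll conditions at the endpoints.

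For the reverse implication, suppose $v$ is end simplicial and fix an interval model of $H$ in which $I_v$ is an end interval, say $L(I_v) = \mathit{Min}\bigcup I_x$ (the leftmost case; the rightmost case is symmetric). Assume for contradiction that $v = u_i$ is internal to a tolled walk $u_0 u_1 \cdots u_k$, so $1 \le i \le k-1$. The two walk-neighbors $u_{i-1}$ and $u_{i+1}$ meet $I_v$, hence lie to the right of its left endpoint, and since $v$ is simplicial they are adjacent to each other. Tracing the walk outward from $v$ toward $u_0$ and $u_k$, the leftmost placement of $I_v$ forces both endpoints to sit on the same (right) side of $v$; combined with the adjacency of the vertices surrounding $v$ in the clique $N[v]$, this produces an edge between an endpoint and a walk vertex that the toll conditions forbid, a contradiction. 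Therefore $v$ is not internal to any tolled walk, i.e.\ $v$ is extreme.

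The main obstacle is the forward implication: translating ``$v$ is not an end vertex'' into the concrete existence of the two flanking vertices $p, q$ and then massaging the resulting $p$-$v$-$q$ walk into one satisfying both toll conditions. This is where the interval/clique-path structure of $H$ must be used in earnest; the reverse implication, by contrast, is a fairly direct consequence of the leftmost placement together with simpliciality.
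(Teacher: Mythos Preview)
The paper does not actually prove Lemma~\ref{lem:alcon2}; it is simply stated with a citation to~\cite{alcon-et-al} and used as a black box. There is therefore no in-paper argument to compare your proposal against.

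Taken on its own, your plan is sound and follows the natural route. The reduction to $H=G[S]$ is correct: because $S$ is toll convex, any tolled walk of $G$ between two vertices of $S$ stays inside $S$, and the toll condition only references edges among walk vertices, so the tolled walks of $G$ with vertex set in $S$ coincide with the tolled walks of $H$. Hence extremality of $v$ in $S$ is an intrinsic property of the interval graph $H$, and it is legitimate to prove the ``extreme $\Leftrightarrow$ end simplicial'' equivalence for $H$ directly. Your use of Lemma~\ref{lem:alcon} to obtain simpliciality, and of the clique-path (consecutive clique arrangement) to translate ``not an end vertex'' into the existence of vertices $p,q\notin N[v]$ on opposite sides of $I_v$, is exactly the right structural input.

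Two places deserve more care when you write it out. In the forward direction, once you have a $p$--$v$--$q$ walk you must justify that it can be pruned to a genuine tolled walk: you need to choose $p$ and $q$ so that each has a \emph{single} neighbour on the walk near its own end (taking $p$ with $R(I_p)$ minimal on the left side and $q$ with $L(I_q)$ maximal on the right side, for instance, makes this work). In the reverse direction, your sentence ``this produces an edge between an endpoint and a walk vertex that the toll conditions forbid'' hides a nontrivial case analysis: when $u_0,u_k\notin N[v]$ you still have to locate a specific index $m\ge 2$ with $u_0u_m\in E(H)$ (or symmetrically at $u_k$), and this uses both the leftmost placement of $I_v$ and the Helly point of the clique $N[v]$ in a slightly delicate way. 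None of this is a genuine obstruction, but it is where the real work lies, consistent with your own assessment.
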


In order to characterize the graphs with toll convexities that are convex geometries, we need to resort to a well-known characterization of interval graphs. Three vertices of a graph form an {\em asteroidal triple} if between any pair of them there exists a path that avoids the neighborhood of the third vertex.

\begin{theorem}\label{thm:lb} {\em \cite{lek-bol}}
A graph $G$ is an interval graph if and only if $G$ is chordal and contains no asteroidal triple.
\end{theorem}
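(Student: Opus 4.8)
The plan is to treat the two implications separately; the forward direction is routine and essentially all of the work lies in the converse, so at the outset I record the standard convention that the three vertices of an asteroidal triple are taken pairwise nonadjacent. For necessity I argue directly from an interval model $\{I_v\}_{v\in V(G)}$. For chordality, suppose $C$ is an induced cycle with at least four vertices and let $v\in V(C)$ be the vertex whose interval has the smallest right endpoint $p=R(I_v)$; its two neighbors on $C$ meet $I_v$ but have right endpoints at least $p$, hence both contain $p$, hence are adjacent, producing a chord of $C$ and a contradiction. For AT-freeness, take any independent triple; since the three intervals are pairwise disjoint they inherit a left-to-right order, say $I_a,I_b,I_c$, with $b$ the middle vertex. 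Any $a$--$c$ path is a sequence of pairwise overlapping intervals whose union is a connected subset of the line containing a point of $I_a$ and a point of $I_c$; as these lie on opposite sides of $I_b$, the union must contain a point of $I_b$, so some vertex of the path lies in $N[b]$. Thus no triple is asteroidal, and interval graphs are chordal and AT-free.

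For the converse I would reduce to an ordering statement via two classical facts. By the Gilmore--Hoffman criterion, $G$ is an interval graph if and only if its maximal cliques admit a linear order in which, for every vertex $v$, the cliques containing $v$ form a contiguous block; reading positions off such an order yields an interval model directly. Since $G$ is chordal, its maximal cliques carry a \emph{clique tree} $T$: a tree whose nodes are the maximal cliques such that, for each vertex $v$, the cliques containing $v$ induce a subtree of $T$. A linear order of the required kind exists exactly when some clique tree is a path (a \emph{clique path}), because a subtree of a path is a subpath, i.e.\ a contiguous block. It therefore suffices to show that a chordal AT-free graph admits a clique path.

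The core step is the contrapositive: if no clique tree of $G$ is a path, I would exhibit an asteroidal triple. Fix a clique tree $T$; being a non-path, it has a node $K$ of degree at least three, with three pairwise-separated branches, and let $S_i=K\cap K_i$ be the minimal separator on the $i$-th edge leaving $K$. In each branch choose $x_i\in K_i\setminus K$; using the running-intersection property one checks that every clique containing $x_i$ lies strictly inside branch $i$, so the $x_i$ are pairwise nonadjacent and $N[x_i]\cap K=S_i$. To join $x_j$ and $x_k$ while avoiding $N[x_i]$, I would route $x_j\to s_j\to s_k\to x_k$ with $s_j\in S_j$, $s_k\in S_k$, using that each $x_\ell$ is adjacent to all of $S_\ell$ (common clique $K_\ell$) and that $s_j,s_k\in K$ are adjacent. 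The one delicate point---and the main obstacle---is guaranteeing representatives with $s_j\notin S_i$ and $s_k\notin S_i$, i.e.\ that the separators do not nest so badly that $S_j\subseteq S_i$ or $S_k\subseteq S_i$; handling this forces a careful choice of the branch node $K$ (and possibly a reorganization of $T$) so that the three separators are pairwise incomparable, and it is precisely here that the full force of chordality and the global clique-tree structure must be used. Once the three avoiding paths are produced, $\{x_1,x_2,x_3\}$ is an asteroidal triple, contradicting the hypothesis; hence a clique path exists and, by the Gilmore--Hoffman criterion, $G$ is an interval graph.
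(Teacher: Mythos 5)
First, a point of reference: the paper does not prove this theorem. It is the classical Lekkerkerker--Boland characterization, imported from \cite{lek-bol} and used as a black box in the toll-convexity argument, so there is no in-paper proof to measure yours against. Judged on its own, your forward direction is complete and correct: the minimal-right-endpoint argument for chordality and the observation that the union of the intervals along any $a$--$c$ path is a connected set straddling $I_b$ are both sound, and the independence of an asteroidal triple is indeed automatic from the definition.

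The converse, however, is a plan rather than a proof, and the gap sits exactly where you flag it. The reduction to ``some clique tree is a path'' via the Gilmore--Hoffman consecutive-clique criterion is legitimate, and your local construction at a branch node $K$ is correct as far as it goes: the $x_i$ are pairwise nonadjacent, $N(x_i)\cap K\subseteq S_i$, and the walk $x_j,s_j,s_k,x_k$ avoids $N[x_i]$ \emph{provided} representatives $s_j\in S_j\setminus S_i$ and $s_k\in S_k\setminus S_i$ exist. But at a branch node of an arbitrary clique tree the separators can be nested or even equal (take three branches hanging off the same minimal separator, as in a subdivided $K_{1,3}$ rooted at its center), in which case no such representatives exist and the two-edge middle segment of your path is unavailable; the asteroidal triple must then be found by descending into the branches and replacing both the endpoints $x_i$ and the connecting segments by longer walks that stay inside their branches. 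Establishing that this descent always terminates in a genuine asteroidal triple whenever \emph{no} clique tree is a path is the actual content of the theorem --- it is where one needs either the Lekkerkerker--Boland analysis of minimal chordal non-interval graphs or a structural lemma on asteroidal sets of maximal cliques --- and it is precisely the step you acknowledge but do not carry out. Since the statement is only cited in the paper, the honest conclusion is that your necessity argument is a valid self-contained proof, while your sufficiency argument is an accurate roadmap with its decisive lemma missing.
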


Suppose that the toll convexity of $G$ is a convex geometry. Let us show first that $G$ is chordal. By the assumption, $V(G)$ (and every of its toll convex subsets) is the convex hull of its extreme vertices. Let $v$ be an extreme vertex of $V(G)$. By definition of extreme vertex, $V(G)\setminus\{v\}$ is toll convex in $G$. Thus, the toll convexity of $G-v$ is a convex geometry (note that any subset of $V(G)\setminus\{v\}$ is toll convex in $G-v$ iff it is toll convex in $G$). Using induction, $G-v$ is chordal. Since by Lemma~\ref{lem:alcon} $v$ is a simplicial vertex of $G$, it follows that $G$ is chordal as well (recall that a graph is chordal iff there is an ordering $v_1,\ldots,v_n$ of its vertices such that $v_i$ is simplicial in $G[\{v_i, v_{i+1}, \ldots, v_n\}]$, $1\leq i\leq n$~\cite{golumbic80}).

Now, let us show that $G$ contains no asteroidal triple. Suppose by contradiction that vertices $a,b,c$ form an asteroidal triple in $G$. Consider the walk $W$ from $a$ to $c$ formed by the concatenation of induced paths $P_{ab}$, from $a$ to $b$, and $P_{bc}$, from $b$ to $c$, such that $P_{ab}$ avoids $N[c]$ and $P_{bc}$ avoids $N[a]$. Note that $a$ is adjacent only to one vertex of $P_{ab}$ and to no vertices of $P_{bc}$. Likewise, $c$ is adjacent only to one vertex of $P_{bc}$ and to no vertices of $P_{ab}$. Thus, $W$ is a tolled walk from $a$ to $c$ passing through $b$. Let $Y=H(\{a,b,c\})$. Since $V(W)\subseteq Y$, $b$ is not an extreme vertex of $Y$. Analogously, $a$ and $c$ are not extreme vertices of $Y$. Since no other vertices of $Y$ are extreme vertices, this implies that $Y$ has no extreme vertices, a contradiction.

The precedent paragraphs tell us that if the toll convexity of $G$ is a convex geometry, then, by Theorem~\ref{thm:lb}, $G$ is an interval graph. To prove the converse, we use the following useful lemma:

\begin{lemma}~\label{lem:alcon3} {\em \cite{alcon-et-al}}
Let $G$ be an interval graph. Then, every vertex that is not an end simplicial vertex lies in a tolled walk between two end simplicial vertices of $G$.
\end{lemma}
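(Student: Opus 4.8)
The plan is to work entirely inside a fixed interval model $\mathscr{I}=\{I_w\}_{w\in V(G)}$ of $G$, with all endpoints distinct, first exhibiting explicit end simplicial vertices at the two extremes of the model and then threading a tolled walk through $v$ between them. I would begin by identifying the endpoints of the desired walk. Let $s$ be the vertex whose interval has the smallest right endpoint. Any two neighbours of $s$ must both contain the point $R(I_s)$ (each intersects $I_s$ and has right endpoint at least $R(I_s)$), so $N(s)$ is a clique and $s$ is simplicial; moreover, no interval lies entirely to the left of $I_s$, so $L(I_s)$ may be slid down to the global minimum of $\bigcup_w I_w$ without creating or destroying any intersection, which shows $s$ is an end vertex. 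Hence $s$ is end simplicial, and symmetrically the vertex $t$ whose interval has the largest left endpoint is end simplicial. Since $v$ is not end simplicial, we have $v\neq s$ and $v\neq t$.

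Next I would construct the walk by a monotone left-to-right sweep of the model. Starting from $s$, I would choose successive vertices whose intervals overlap the previous one and progress strictly to the right, reaching $v$ and then continuing on to $t$; connectivity of $G$ (equivalently, the fact that $\bigcup_w I_w$ is a single interval) guarantees that such a chain of pairwise overlapping intervals exists, and $v$ can be inserted because $I_v$ lies between the extremes $I_s$ and $I_t$. The purpose of insisting on a monotone sweep is that, once the walk has passed a given abscissa, it never returns to the left of it; in particular, after leaving the neighbourhood of $s$ the walk stays among intervals whose left endpoint exceeds $R(I_s)$, and symmetrically it enters the neighbourhood of $t$ only at its final step.

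It then remains to verify the two toll conditions. Because $N(s)$ consists precisely of the intervals containing the point $R(I_s)$, and the sweep abandons that point after its first step, $s$ is adjacent to no walk vertex other than its successor $u_1$; the condition at $t$ is symmetric. I expect this endpoint verification to be the main obstacle, and it is also where the sub-cases live: if $v$ is itself adjacent to $s$ (so that $I_v$ contains $R(I_s)$) one must take $u_1=v$ and check that the next vertex already lies strictly to the right of $R(I_s)$, and dually if $v$ is adjacent to $t$. The delicate point throughout is to guarantee simultaneously that the walk never backtracks into $N(s)$ nor into $N(t)$ while still being forced to pass through the interior vertex $v$. Once the monotone sweep is arranged so that both non-backtracking properties hold, the resulting sequence $s,u_1,\ldots,u_{k-1},t$ is a tolled walk having $v$ as an internal vertex, which proves the lemma.
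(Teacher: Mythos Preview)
The paper does not supply its own proof of this lemma; it is quoted from~\cite{alcon-et-al} and used as a black box in the survey. So there is no in-paper argument to compare against, and your sketch stands or falls on its own.

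Your outline is the natural one and is essentially correct: choosing $s$ (minimum right endpoint) and $t$ (maximum left endpoint), verifying they are end simplicial, and sweeping monotonically through the model while exploiting that $N(s)$ is precisely the set of intervals containing the point $R(I_s)$ is exactly the right mechanism for the toll condition at the endpoints. One point you flag but do not settle is the sub-case $v\in N(s)$ with $u_1=v$: it is not automatic that $v$ has a neighbour whose interval lies strictly to the right of $R(I_s)$. What rescues you here is that if every neighbour of $v$ lies in $N[s]$, then, since $N[s]$ is a clique containing $v$, one gets $N[v]=N[s]$; thus $v$ is a true twin of $s$ and hence end simplicial, contrary to hypothesis. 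So the needed neighbour exists. A second small point: when you detour to $v$ in the middle of the sweep you may momentarily lose monotonicity, but because tolled walks are walks (vertices may repeat), you can visit $v$ and return to the current sweep vertex before continuing rightward; this keeps the walk inside $V(G)\setminus N(s)$ after the first step and outside $N(t)$ until the last. With these two observations made explicit, your plan yields a complete proof.
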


Using Lemma~\ref{lem:alcon2}, Lemma~\ref{lem:alcon3}, and the fact that interval graphs are hereditary, we have:

\begin{theorem} {\em \cite{alcon-et-al}}
A graph $G$ is an interval graph if and only if the toll convexity of $G$ is a convex geometry.
\end{theorem}

\subsection{Weakly toll convexity}

A walk $u_0u_1 \ldots u_{k-1}u_k$ is a \textit{weakly toll walk} if $u_0u_i \in E(G)$ implies $u_i = u_1$ and $u_ju_k\in E(G)$ implies $u_j=u_{k-1}$. The concept of weakly toll walk is a relaxation of the concept of tolled walk (in the sense that every tolled walk is a weakly toll walk). In the previous subsection, tolled walks have been used to characterize interval graphs via convex geometries. Similarly, weakly toll walks are used as a tool to characterize proper interval graphs. A \textit{proper interval graph} is an interval graph that admits an interval model in which no interval properly contains another, or, equivalently, an interval model in which all the intervals have the same length. Roberts~\cite{roberts} proved that proper interval graphs are exactly the interval graphs containing no $K_{1,3}$ as an induced subgraph. The graph $K_{1,3}$ consists of four vertices $a,b,c,d$ and three edges $ab,ac,ad$.

Let ${\mathscr P}^w$ be the path system that maps a graph $G$ to its weakly toll walks. The convexity defined over such walks is called the {\em weakly toll convexity} of $G$, and the associated convex sets are called {\em weakly toll convex sets}. The basic references on the weakly toll convexity are~\cite{gutierrez-tondato,gutierrez-tondato-2}.

Lemmas~\ref{lem:alcon} and~\ref{lem:alcon2} have similar versions for weakly toll convex sets:

\begin{lemma}\label{lem:alcon4} {\em \cite{gutierrez-tondato}}
If $v$ is an extreme vertex of a weakly toll convex set $S$ of a graph $G$, then $v$ is simplicial in $G[S]$.
\end{lemma}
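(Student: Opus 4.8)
The plan is to argue by contraposition: I would show that if $v$ is \emph{not} simplicial in $G[S]$, then $v$ fails to be an extreme vertex of $S$. So suppose $v$ is nonsimplicial in $G[S]$. Then $N_{G[S]}[v]$ is not a clique, which means there exist two distinct vertices $x,y\in S$, both adjacent to $v$ in $G[S]$, with $xy\notin E(G)$. The idea is to use these two non-adjacent neighbors to build a weakly toll walk through $v$ whose endpoints lie in $S\setminus\{v\}$.

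The natural candidate is the length-two walk $W=xvy$, that is, $u_0=x$, $u_1=v$, $u_2=y$ with $k=2$. I would verify directly that $W$ is a weakly toll walk. For the first defining condition, the only edge incident with $u_0=x$ among the vertices of $W$ is $xv$, for which $u_1=v$; the only remaining case $xy$ is vacuous since $xy\notin E(G)$. Symmetrically, for the second condition the only edge incident with $u_2=y$ is $vy$, for which $u_{k-1}=v$, and again $xy\notin E(G)$ makes the remaining case vacuous. Hence $W$ is a weakly toll walk from $x$ to $y$ passing through $v$ as an internal vertex.

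Since $x,y\in S\setminus\{v\}$ and the walk $W$ certifies $v\in I(x,y)$, the vertex $v$ lies in the interval of two vertices of $S\setminus\{v\}$. Therefore $I(S\setminus\{v\})\neq S\setminus\{v\}$, so $S\setminus\{v\}$ is not convex and $v$ is not an extreme vertex of $S$. This establishes the contrapositive, and hence the lemma.

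I expect essentially no serious obstacle here: the whole argument reduces to recognizing that a length-two path through $v$ joining two non-adjacent neighbors is automatically a weakly toll walk, precisely because the weak-toll conditions only restrict edges incident to the two \emph{endpoints}, and the single potentially offending edge $xy$ is exactly the non-edge guaranteed by non-simpliciality. The only point requiring minor care is to confirm that the endpoints $x,y$ can indeed be chosen inside $S$---which they can, since non-simpliciality is taken relative to $G[S]$---so that the interval computation witnesses the non-convexity of $S\setminus\{v\}$ itself rather than of some larger set. Note also that the same walk $W$ simultaneously recovers the tolled analog (Lemma~\ref{lem:alcon}), since $xvy$ is in fact a tolled walk under the same hypothesis, reflecting that the weakly toll convexity is a relaxation of the toll convexity.
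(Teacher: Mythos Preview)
Your argument is correct. The contrapositive reduces to the observation that an induced $P_3$ centered at $v$ with non-adjacent endpoints $x,y\in S$ is a weakly toll walk (indeed, a tolled walk and an induced path), so $v\in I(x,y)$ and $S\setminus\{v\}$ fails to be convex. Note that the present paper is a survey and does not supply its own proof of this lemma; it is merely cited from~\cite{gutierrez-tondato}, so there is no in-paper argument to compare against. Your proof is the standard one and matches what one finds in the cited source.
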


\begin{lemma}\label{lem:alcon5} {\em \cite{gutierrez-tondato}}
A vertex $v$ of a proper interval graph $G$ is an extreme vertex of a weakly toll convex set $S\subseteq V(G)$ if and only if $v$ is an end simplicial vertex of $G[S]$.
\end{lemma}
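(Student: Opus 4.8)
The plan is to prove the two directions separately, exploiting the tolled-walk results already in hand. Throughout put $H=G[S]$; since proper interval graphs are hereditary, $H$ is again a proper interval graph, and by Roberts' theorem~\cite{roberts} it is interval and $K_{1,3}$-free, so Lemma~\ref{lem:alcon2} and the indifference (umbrella) ordering of $H$ are available. The observation that organizes everything is that every tolled walk is a weakly toll walk, so the family of weakly toll convex sets is contained in the family of toll convex sets; consequently, if a set is weakly toll convex then it is toll convex.

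For the forward direction, suppose $v$ is an extreme vertex of the weakly toll convex set $S$. Then both $S$ and $S\setminus\{v\}$ are weakly toll convex, hence, by the observation above, both are toll convex; thus $v$ is also an extreme vertex of $S$ in the toll convexity. Applying Lemma~\ref{lem:alcon2} to the interval graph $G$ yields immediately that $v$ is an end simplicial vertex of $H$. (Lemma~\ref{lem:alcon4} already gives that $v$ is simplicial, so the only content being added is that $v$ is an \emph{end} vertex, which is exactly what the toll statement supplies.)

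The backward direction is where the real work lies, because it cannot be imported from the toll case: upgrading ``$S\setminus\{v\}$ is toll convex'' to ``$S\setminus\{v\}$ is weakly toll convex'' is a genuine strengthening. So assume $v$ is an end simplicial vertex of $H$ and, without loss of generality (weakly toll walks are closed under reversal), let $v$ be the leftmost vertex $v_1$ of an indifference order of $H$, so that $N(v_1)=\{v_2,\dots,v_r\}$ is a clique and every other vertex lies to its right. I would argue that no weakly toll walk $W=w_0\cdots w_m$ of $H$ with distinct endpoints in $S\setminus\{v_1\}$ can contain $v_1$ as an internal vertex, which is precisely the statement that $S\setminus\{v_1\}$ is weakly toll convex, hence that $v_1$ is extreme. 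First one shows that neither endpoint can be adjacent to $v_1$: if $w_0$ were adjacent to $v_1$, the weakly toll condition at $w_0$ would force the occurrence of $v_1$ on the walk to coincide with $w_1$, and then the successor $w_2\in N(v_1)$ would lie in the clique $N(v_1)$ together with $w_0$, making $w_0$ adjacent to the walk vertex $w_2\neq w_1$ and violating the weakly toll condition (the degenerate case $w_2=w_0$ only defers the same contradiction along the forced oscillation between $w_0$ and $v_1$, which can never reach a distinct endpoint $\neq v_1$). Thus both endpoints lie strictly to the right of $N(v_1)$. One then tracks the walk as it descends from the right region into the clique $N(v_1)$, touches $v_1$, and climbs back out; writing the entry edge as $w_{c-1}w_c$ with $w_{c-1}$ outside $N[v_1]$ and $w_c\in N(v_1)$, the umbrella property forces adjacencies to propagate across the right region, and one aims to show that this makes the endpoint $w_0$ adjacent to some walk vertex other than $w_1$, again contradicting the weakly toll condition.

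I expect this last step to be the main obstacle, for two related reasons. First, unlike a tolled walk, a weakly toll walk may let an endpoint be adjacent to a \emph{repeated} occurrence of its designated neighbor, so one cannot simply read off that an endpoint has a unique walk-neighbor; the argument must keep track of the first and last occurrences of $v_1$ and of the gateway vertices $w_1$ and $w_{m-1}$. Second, the hypothesis of \emph{properness} is genuinely needed here and must be used precisely: in the (non-proper) interval graph $K_{1,3}$ with center $x$ and leaves $a,b,c$, the leaf $a$ is an end simplicial vertex yet the walk $b\,x\,a\,x\,c$ is weakly toll with $a$ internal, so $a$ fails to be extreme. The role of $K_{1,3}$-freeness (via the umbrella ordering) is exactly to rule out such a configuration, by forbidding a neighbor of $v_1$ from having two mutually non-adjacent ``private'' vertices on the two sides of the walk. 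Packaging the descent-and-return analysis so that this obstruction is made to bite---while correctly handling walk repetitions---is the crux of the proof; the remaining bookkeeping, together with the appeal to heredity, is routine.
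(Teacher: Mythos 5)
The paper does not actually prove this lemma---it is quoted verbatim from \cite{gutierrez-tondato}---so your attempt can only be judged on its own terms. Your forward direction is correct and clean: weakly toll convexity implies toll convexity (the former quantifies over a larger family of walks), so an extreme vertex of $S$ in the weak sense is extreme in the toll sense, and Lemma~\ref{lem:alcon2} (applicable since proper interval graphs are interval graphs) gives that $v$ is end simplicial in $G[S]$. Your reduction of the backward direction to the single claim that no weakly toll walk of $G[S]$ between two distinct vertices of $S\setminus\{v\}$ passes through $v$ is also the right reduction (weak toll convexity of $S$ confines such walks to $S$), and the argument that neither endpoint can lie in $N[v]$ is sound.

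The genuine gap is exactly where you flag it: the ``descent-and-return'' analysis is announced rather than carried out, and as sketched it is aimed at the wrong landmark. You track the walk's entry into the clique $N(v_1)$, but the contradiction does not live there; it lives at the endpoint $w_0$. What is missing is a step that pins the gateway vertex $w_1$ to one side of $w_0$ in the umbrella order, without which ``$w_0$ is adjacent to some walk vertex other than $w_1$'' simply does not follow, since a weakly toll walk may revisit $w_1$ (and even $w_0$) arbitrarily often. Concretely, one can close the gap as follows: put $p=\mathrm{pos}(w_0)$ and assume w.l.o.g.\ $p<\mathrm{pos}(w_m)$. Since the walk must drop below the left end $a$ of $N[w_0]$ to reach $v_1$, the umbrella property applied to the first edge crossing position $a$ shows its right end has position in $[a,p)$ (a crossing edge whose right end had position $\ge p$ would make its left end, at position $<a$, adjacent to $w_0$, which is impossible); by the weakly toll condition that right end is the vertex $w_1$, so $\mathrm{pos}(w_1)<p$. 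After visiting $v_1$ the walk must climb to a position exceeding $p$ to reach $w_m$, and the umbrella property applied to the first edge crossing position $p$ produces a walk vertex adjacent to $w_0$ at a position $>p$, hence distinct from $w_1$---the desired violation. So the proposal is a correct plan with a correct easy half, but the half that carries the actual content of the lemma (and the only place where properness, i.e.\ $K_{1,3}$-freeness, is used) is left open.
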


Using arguments similar to those used in the previous section, one can prove that if the weakly toll convexity of $G$ is a convex geometry, then $G$ is chordal and cannot contain asteroidal triples and induced subgraphs isomorphic to $K_{1,3}$. Hence, using the characterization of proper interval graphs in~\cite{roberts}, $G$ is a proper interval graph. Conversely, assume that $G$ is a proper interval graph. Then, it is an interval graph. This means that, by Lemma~\ref{lem:alcon3}, every vertex of $G$ that is not an end simplicial vertex lies in a tolled walk between two end simplicial vertices. Since every tolled walk is a weakly toll walk and, by Lemma~\ref{lem:alcon5}, end simplicial vertices of a proper interval graph are extreme vertices, we have:

\begin{lemma}~\label{lem:alcon6} {\em \cite{gutierrez-tondato}}
Let $G$ be a proper interval graph. Then, every vertex that is not an end simplicial vertex lies in a weakly toll walk between two end simplicial vertices of $G$.
\end{lemma}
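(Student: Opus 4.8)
The plan is to deduce this lemma almost immediately from Lemma~\ref{lem:alcon3}, together with the observation (already recorded in the text preceding the statement) that every tolled walk is a weakly toll walk. In other words, I would not argue about weakly toll walks directly, but rather produce a walk that is already known to be tolled and then invoke the containment of tolled walks in weakly toll walks.

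First I would note that a proper interval graph $G$ is, by definition, an interval graph, so Lemma~\ref{lem:alcon3} applies to it verbatim. Here it is worth being slightly careful that the two statements refer to the same class of vertices: the notion of an end simplicial vertex depends only on the existence of an interval model in which the vertex appears as an end interval, together with simpliciality in $G$. Neither condition mentions the convexity being used, so the set of end simplicial vertices is the same whether one is reasoning with the toll convexity or the weakly toll convexity. Thus Lemma~\ref{lem:alcon3} and the present lemma speak about exactly the same distinguished vertices.

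Next I would take any vertex $v$ of $G$ that is not an end simplicial vertex. By Lemma~\ref{lem:alcon3}, there is a tolled walk $W$ in $G$ that contains $v$ and whose two endpoints are end simplicial vertices. Since the defining conditions of a weakly toll walk are strictly weaker than those of a tolled walk, the walk $W$ is itself a weakly toll walk. Hence $v$ lies in a weakly toll walk between two end simplicial vertices, which is precisely the assertion to be proved.

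I expect no genuine obstacle here, since the substantive content of the statement is already furnished by Lemma~\ref{lem:alcon3}; the remaining step is purely the inclusion of tolled walks in weakly toll walks. The only point that deserves a moment's attention is confirming that passing from ``tolled'' to ``weakly toll'' does not alter which vertices count as end simplicial, but this is immediate because that notion is purely graph-theoretic and independent of the chosen path system.
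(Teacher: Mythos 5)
Your proposal is correct and matches the paper's own derivation: the text preceding the lemma obtains it exactly by noting that a proper interval graph is an interval graph, applying Lemma~\ref{lem:alcon3} to get a tolled walk between two end simplicial vertices, and observing that every tolled walk is a weakly toll walk. Your additional remark that the notion of end simplicial vertex is independent of the convexity is a sensible sanity check, though the paper does not find it necessary to state.
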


\begin{theorem} {\em \cite{gutierrez-tondato}}
A graph $G$ is a proper interval graph if and only if the weakly toll convexity of $G$ is a convex geometry.
\end{theorem}

\subsection{$l^k$-convexity}

Let ${\mathscr P}^{\leq k}$ be the path system that maps a graph $G$ to its induced paths of length at most $k$. The convexity defined over such paths is called the {\em $l^k$-convexity} of $G$, and the associated convex sets are called {\em $l^k$-convex sets}. The $l^2$-convexity is also called {\em $P_3^*$-convexity}~\cite{araujo}.

Suppose that the $l^2$-convexity of $G$ is a convex geometry. It is easy to see that the extreme vertices of the $l^2$-convex set $V(G)$ are the simplicial vertices of $G$. Thus, if $G$ contains an induced cycle $C$ with at least four vertices, then $H(V(C))$ does not have simplicial (extreme) vertices, a contradiction. This means that $G$ is chordal. Now, we prove that $G$ is also a {\em cograph} (a graph that contains no $P_4$ as an induced subgraph -- see~\cite{corneil-et-al} for details). Let $P=x_0,\ldots,x_p$ be a maximum induced path of $G$, between two simplicial vertices of $G$. Such a path exists because $G$ is chordal~\cite{dirac}. Assume $p\geq 3$. The set $H(V(P))$ is an $l^2$-convex set of $G$ whose extreme vertices are $x_0$ and $x_p$. If there exists $z\in H(V(P))\setminus V(P)$ that lies in an induced path of length two between $x_0$ and $x_p$, $z$ must be adjacent to every $x_i$; otherwise, $P$ plus the edges $x_0z$ and $zx_p$ forms an induced cycle of $G$ with at least four vertices. Since $N[x_0]$ and $N[x_p]$ are cliques in $H(V(P))$,
\[H(V(P))=(N[x_0]\cap N[x_p])\cup\{x_0,x_p\}.\]
This implies $H(V(P))\neq H(\{x_0,x_p\})$, a contradiction. Thus $G$ is a chordal cograph.

Conversely, let $G$ be a chordal cograph, and let $S\subseteq V(G)$ be an $l^2$-convex set of $G$. Since $G[S]$ is chordal, every vertex in $S$ lies in an induced path between two simplicial vertices of $G[S]$. Such a path must have length two, otherwise $G$ contains an induced path of length at least four as an induced subgraph. Hence,

\begin{theorem} {\em \cite{araujo-sampaio,gpt}}
A graph $G$ is a chordal cograph if and only if the $l^2$-convexity of $G$ is a convex geometry.
\end{theorem}

We now analyze graphs with $l^3$-convexities that are convex geometries. A vertex $u$ is a {\em universal vertex} in a graph $G$ if $u$ is adjacent to every other vertex of $G$. An \textit{$n$-gem} is a graph $G_n$ such that: (i) $V(G_n)=\{x_0,\ldots,x_n,u_n\}$ ($n\geq 4$); (ii) $x_0,\ldots,x_n$ is an induced path; (iii) $u_n$ is a universal vertex. See Figure~\ref{fig:n-gem}. 

\begin{figure}[ht]
\tikzstyle{miEstilo}= [thin, dotted]
\begin{center}
\begin{tikzpicture}[scale=0.8]
\node[draw,circle] (1) at (-4,0) {$x_0$};
\node[draw,circle] (2) at (-2,0) {$x_1$};
\node[draw,circle] (3) at (2,0) {$x_n$};
\node[draw,circle] (4) at (0,2) {$u_n$};
\draw[miEstilo]  (2)--(3);
\draw (1)--(2);
\draw (2)--(4)--(3);
\draw (4)--(1);
\draw[miEstilo] (4)--(-1,0);
\draw[miEstilo] (4)--(0,0);
\draw[miEstilo] (4)--(1,0);
\end{tikzpicture}\end{center}
\caption{An $n$-gem.} \label{fig:n-gem}
\end{figure}

Suppose that $G$ contains an $n$-gem $G_n$ as an induced subgraph. We say that $G_n$ is {\em solved} if there exists in $G$ a $P_4$ connecting $x_0$ and $x_n$ that avoids $u_n$. In~\cite{gpt},  graphs with $l^3$-convexities that are convex geometries are characterized as follows:

\begin{theorem}\label{l3} {\em \cite{gpt}}
Let $G$ be a graph. The $l^3$-convexity of $G$ is a convex geometry if and only if the following conditions hold:

\begin{enumerate}

\item $G$ is chordal;

\item $\mathit{diam}(G)\leq 3$;

\item every induced $n$-gem $(n\geq 4)$ contained in $G$ is solved.
\end{enumerate}

\end{theorem}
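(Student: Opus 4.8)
The plan is to follow the template already used in the preceding subsections: first identify the extreme vertices, then prove necessity of the three conditions one at a time, and finally isolate a reachability lemma that gives sufficiency. I begin by pinning down the extreme vertices. A vertex that is internal to an induced path of length at most $3$ is, in particular, the midpoint of an induced $P_3$ and therefore has two nonadjacent neighbours; conversely a simplicial vertex is never an internal vertex of any induced path. Hence, exactly as in the $l^2$-analysis, $v$ is an extreme vertex of an $l^3$-convex set $S$ if and only if $v$ is simplicial in $G[S]$. Throughout, I will freely use that ${\mathit ext}(S)$ is the set of simplicial vertices of $G[S]$.

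For necessity of chordality I would reuse verbatim the deletion induction of the toll-convexity subsection. If $G$ is an $l^3$-convex geometry, then the nonempty convex set $V(G)$ has an extreme vertex $v$, which is simplicial. Removing a simplicial vertex creates no new edges and destroys no induced path among the remaining vertices, and a simplicial vertex is internal to no induced path; thus the $l^3$-convex sets of $G$ avoiding $v$ are precisely the $l^3$-convex sets of $G-v$. Consequently $G-v$ is again a convex geometry, so by induction it is chordal, and since $v$ is simplicial in $G$ this extends $G-v$'s perfect elimination ordering, making $G$ chordal.

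For conditions (2) and (3) I would exhibit, in each case, a convex set whose hull of extreme vertices is strictly smaller than itself. For the diameter, note first that if $d(u,v)\ge 4$ then no induced path of length at most $3$ joins $u$ and $v$, so $I(u,v)=\emptyset$. The prototype is the induced $P_5=a_0a_1a_2a_3a_4$: its vertex set is convex, its only extreme vertices are $a_0,a_4$, and $H(\{a_0,a_4\})=\{a_0,a_4\}$ fails to regenerate $a_2$. In general I would take the hull of a diametral geodesic and argue that its extreme vertices are far-apart simplicial ends whose pairwise (empty) intervals cannot recover an internal vertex, contradicting $S=H({\mathit ext}(S))$. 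For gems, I would examine the hull $Y=H(V(G_n))$ of an \emph{unsolved} induced $n$-gem: the only induced $x_0$–$x_n$ paths of length at most $3$ run through common neighbours of $x_0$ and $x_n$ (through $u_n$ or another shortcut), and, since being unsolved means no induced $P_4$ from $x_0$ to $x_n$ avoids $u_n$, none of the internal vertices $x_1,\dots,x_{n-1}$ ever enters $I(x_0,x_n)$; showing that these vertices remain nonsimplicial in $G[Y]$ and are never generated from ${\mathit ext}(Y)$ yields the required contradiction.

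The sufficiency direction is where the real work lies, and I expect it to be the main obstacle. Assuming (1)–(3), I want every convex $S$ to satisfy $S=H({\mathit ext}(S))$. Given a nonsimplicial $v\in S$, chordality of $G[S]$ places $v$ on an induced path $s_0s_1\ldots s_k$ between simplicial ends of $G[S]$; if $k\le 3$ then $v\in I(s_0,s_k)\subseteq H({\mathit ext}(S))$ and there is nothing more to do. The difficulty is the case $k\ge 4$: conditions (2) and (3) are \emph{not} hereditary to $G[S]$, so a naive induction on induced subgraphs is unavailable, and the argument must stay inside the convexity. Here I would invoke $\mathit{diam}(G)\le 3$ to force a short connection between $s_0$ and $s_k$ and, together with chordality, a vertex adjacent to the whole induced path, i.e.\ a gem; the solved-gem hypothesis should then supply the short induced paths needed to place $x_1,\dots,x_{n-1}$ into $H({\mathit ext}(S))$ through a chain of length-at-most-$3$ intervals. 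Making precise how ``solved'' actually \emph{generates} all the internal path vertices, rather than merely the two new vertices of the solving $P_4$, is the crux; this step is the $l^3$-analogue of Lemmas~\ref{lem:sc2} and~\ref{lem:alcon6}, and I anticipate it requiring the most careful bookkeeping.
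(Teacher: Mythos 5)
First, a point of comparison: the survey does not actually prove this theorem --- it states that the proof ``contains many details and will be omitted'' and defers to the cited reference --- so your attempt can only be judged on its own terms. On those terms, what you have is a correct skeleton with the two easy pieces filled in and the hard pieces left as declared intentions. The extreme-vertex characterization ($v$ extreme in $S$ iff $v$ simplicial in $G[S]$) is right, and the deletion induction for chordality is sound: a simplicial vertex is internal to no induced path of length at most $3$, so the convex sets of $G-v$ are exactly the convex sets of $G$ avoiding $v$, and $G-v$ inherits the convex geometry property.

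The gaps elsewhere are genuine. For condition (2), your plan is to take $Y=H(V(P))$ for a diametral geodesic $P$ and argue that $\mathit{ext}(Y)$ consists of ``far-apart simplicial ends.'' Nothing forces this: $Y$ may acquire many new simplicial vertices, they need not be pairwise at distance at least $4$, and even when two particular extreme vertices have empty interval, the hull of the full set $\mathit{ext}(Y)$ can still regenerate $Y$ through chains of short paths; you never exhibit a specific vertex of $Y$ that $H(\mathit{ext}(Y))$ provably misses. For condition (3), your first-iteration computation is fine (indeed no $x_i$ with $1\le i\le n-1$ lies in $I(x_0,x_n)$ when the gem is unsolved), but $u_n\in I(x_0,x_n)$ via the induced path $x_0u_nx_n$, so the hull does grow, and the assertions that the $x_i$ ``remain nonsimplicial in $G[Y]$'' and that no \emph{other} extreme vertex of $Y$ ever generates them are exactly what must be proved and are not. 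Most importantly, the sufficiency direction --- the analogue of Lemmas~\ref{lem:sc2} and~\ref{lem:alcon6}, namely that under (1)--(3) every nonsimplicial vertex of a convex set $S$ is reached from $\mathit{ext}(S)$ by iterated length-at-most-$3$ intervals --- is not carried out at all. You correctly identify it as the crux and correctly observe that the non-hereditary example of Figure~\ref{figure-example} blocks a naive induction on induced subgraphs, but naming the obstacle is not the same as overcoming it. As it stands this is a plausible research plan, not a proof.
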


The proof of the above theorem contains many details and will be omitted. The ``only if'' part can be generalized to prove the following result:

\begin{theorem} {\em \cite{gpt}}
Let $G$ be a graph and let $k\geq 2$. If the $l^k$-convexity of $G$ is a convex geometry, then $G$ is chordal and $\mathit{diam}(G)\leq k$.
\end{theorem}

Despite the above result, a complete characterization of graphs with $l^k$-convexities that are convex geometries, for arbitrary $k$, remains still as an open question.

A surprising fact is that the class of graphs characterized in Theorem~\ref{l3} is not hereditary. The following example is elucidative:

\begin{example}{\em
Let $G$ be the graph depicted in Figure~\ref{figure-example}. Note that the $l^3$-convexity of $G$ is a convex geometry. However, for $x\in\{2,5\}$, the $l^3$-convexity of $G-x$ is not a convex geometry, since:

\begin{itemize}
\item $V(G-x)$ is trivially an $l^3$-convex set of $G-x$,
\item $\mathit{ext}(V(G-x))=\{1,7\}$, and
\item $H(\{1,7\})=\{1,7\}\neq V(G-x)$.
\end{itemize}

\noindent This shows that the class of graphs with $l^3$-convexities that are convex geometries is not hereditary. As far as the authors know, this is the only practical example of a nonhereditary class of graphs with convex geometries of a certain type.}
\end{example}

\begin{figure}[ht]
\begin{center}
\begin{tikzpicture}[scale=0.8]
\node[draw,circle] (6) at (4,0) {$1$};
\node[draw,circle] (7) at (6,0) {$3$};
\node[draw,circle] (12) at (9,0) {$4$};
\node[draw,circle] (8) at (12,0) {$6$};
\node[draw,circle] (9) at (14,0) {$7$};
\node[draw,circle] (10) at (6,2) {$2$};
\node[draw,circle] (11) at (12,2) {$5$};
\draw (6)--(7)--(12)--(8)--(9)--(11)--(8);
\draw (12)--(10)--(11)--(12);
\draw (7)--(10)--(6);
\end{tikzpicture}\end{center}
\caption{The $l^3$-convexity of $G$ is a convex geometry.} \label{figure-example}
\end{figure}

\section{Other convexities}

In this section, we present some contributions to the study of convex geometries, some of which not associated with path systems.

\subsection{$P_3$ convexity}

Consider the path system that maps a graph $G$ to its paths of length two (not necessarily induced). The convexity defined over such paths is called the {\em $P_3$ convexity} of $G$~\cite{campos-et-al}, and the associated convex sets are called {\em $P_3$ convex sets}. Note that $S$ is a $P_3$ convex set if and only if every $x\notin S$ has at most one neighbor in $S$. Thus, the determination of the convex hull of a set in the $P_3$ convexity may be viewed as an infection process where noninfected vertices having two or more infected  neighbors become infected (for more details, see~\cite{campos-et-al} and references therein).

\begin{theorem}
The $P_3$ convexity of $G$ is a convex geometry if and only if $G$ is a forest of stars.
\end{theorem}

\begin{proof}
First, suppose that the $P_3$ convexity of $G$ is a convex geometry. If there are vertices $a,b,c \in V(G)$ inducing a $K_3$, then $a \in H(\{b,c\})$ and $b \in H(\{a,c\})$, which contradicts the anti-exchange property, since $\{c\}$ is a convex set. Therefore, $G$ is $K_3$-free and the endpoints of any edge of $G$ form a convex set. We also have that $G$ is $C_4$-free, because if there is an induced $C_4 = abcda$ in $G$, then $S = \{c,d\}$ is a convex set, $a \in H(S \cup \{b\})$, and $b \in H(S \cup \{a\})$, which contradicts the anti-exchange property. Now, suppose that $G$ contains an induced $C_5 = abcdea$ in $G$. Since $\{c,d\}$ is a convex set and $G$ is $C_4$-free, we have $H(\{c,e\}) = \{c,d,e\}$, because if there is $v \in H(\{c,e\}) \setminus \{c,d,e\}$ , then $\{c,d,e,v\}$ would be an induced $C_4$. Since $a \in H(S \cup \{b\})$ and $b \in H(S \cup \{a\})$, the anti-exchange property does not hold. Therefore, $G$ is $C_5$-free. Next, suppose that $G$ contains an induced $P_4 = abcd$. We have that $S = \{a,d\}$ is a convex set, because if there is a $P_3 = aed$ in $G$, then we would have that $be,ce \not\in E(G)$ (since $G$ is $K_3$-free) and that $abcde$ is an induced $C_5$, which is not possible. But then the anti-exchange property does not hold because $b \in H(\{a\} \cup S)$ and $c \in H(\{b\} \cup S)$. Thus, $G$ is $P_4$-free.

Since $G$ is $(K_3,C_4,P_4)$-free, we can say that $G$ is a cograph having no triangles or induced cycles with $4$ vertices. We know that every nontrivial connected cograph $G$ is the join\footnote{The {\em join} of two graphs $G_1$ and $G_2$ is the graph $G$ with $V(G)=V(G_1)\cup V(G_2)$ and $E(G)=E(G_1)\cup E(G_2) \cup \{xy\mid x\in V(G_1), y\in V(G_2)\}$.} of two graphs $G_1$ and $G_2$. Since $G$ is $K_3$-free, we have that the vertices of $G_i$ for every $i \in \{1,2\}$ induce an independent set; and since $G$ is $C_4$-free, we have that $|V(G_i)| = 1$ for at least one $i \in \{1,2\}$. Therefore, every connected component of $G$ is a star.

Conversely, consider that $G$ is a forest of stars. Since a star has at most one vertex with degree at least 2, $G$ satisfies the anti-exchange property. Thus the $P_3$ convexity of $G$ is a convex geometry.
\end{proof}

\subsection{Triangle path convexity}

We say that a path $P$ in graph $G$ is a {\em triangle path} if every two vertices of $P$ with distance greater than 2 in $P$ are non-adjacent in $G$. A set $S \subseteq V(G)$ is {\em tp-convex} if for any $x,y \in S$, every vertex belonging to some triangle path between $x$ and $y$ belongs to $S$. The convexity associated with the tp-convex sets of $G$ is the {\em triangle path convexity} of $G$~\cite{changat,dourado-sampaio}.

\begin{theorem}\label{thm:triangle-path}
The triangle path convexity of $G$ is a convex geometry if and only if $G$ is a forest.
\end{theorem}

\begin{proof}
First, suppose that the triangle path convexity of $G$ is a convex geometry. If there are vertices $a,b,c \in V(G)$ inducing a $K_3$, then $a \in H(\{b,c\})$ and $b \in H(\{a,c\})$, which contradicts the anti-exchange property, since $\{c\}$ is a convex set. Therefore, $G$ is $K_3$-free and the endpoints of any edge of $G$ form a convex set. We also have that $G$ is $C_k$-free for $k\geq 4$, because if there is an induced $C_k = v_1 \ldots v_kv_1$ in $G$ ($k \ge 4$), then $S = \{v_2,v_3\}$ is a convex set, $v_1 \in H(S \cup \{v_4\})$, and $v_4 \in H(S \cup \{v_1\})$, which contradicts the anti-exchange property. Hence, $G$ is acyclic.

Conversely, suppose that $G$ is a forest. As $G$ is $K_3$-free, the triangle path and the monophonic convexities of $G$ are identical. Since $G$ is a forest, $G$ is a chordal graph, which means, by Theorem~\ref{the:monophonic}, that both the monophonic and triangle path convexities of $G$ are convex geometries.	
\end{proof}

\subsection{All-path convexity}

A set $S\subseteq V(G)$ is \textit{ap-convex} if, for any $x,y\in S$, every vertex lying in a path between $x$ and $y$ belongs to $S$. The convexity associated with the ap-convex sets of $G$ is the \textit{all-path convexity} of $G$. A study on the all-path convexity can be found in~\cite{protti}.

It is easy to see that $v$ is an extreme vertex of an ap-convex set $S$ with $|S|>1$ if and only if $v$ is a pendant vertex (a vertex of degree one) in $G[S]$. 

\begin{theorem}\label{thm:all-path}
The all-path convexity of a graph $G$ is a convex geometry if and only if $G$ is a forest.
\end{theorem}

\begin{proof}
Suppose that the all-path convexity of $G$ is a convex geometry and $G$ contains a cycle $C$. Then, the vertices in $V(C)$ are non-pendant and $H(V(C))$ contains no extreme vertices, a contradiction. Thus $G$ is a forest. Conversely, if $G$ is a forest, then every non-pendant vertex of $G$ lies in a path between two pendant vertices, and this property holds for every induced subgraph of $G$. This means that every ap-convex set of $G$ is the convex hull of its extreme vertices.
\end{proof}

\subsection{Cycle convexity}

We say that a set $S\subseteq V(G)$ is {\em c-convex} if no vertex $u\notin S$ is adjacent to distinct $x,y\in S$ such that there is a path from $x$ to $y$ in $G[S]$; in other words, no $u\notin S$ is contained in a cycle $C$ with $V(C)\backslash S = \{u\}$. The convexity associated with the c-convex sets of $G$ is the \textit{cycle convexity} of $G$. See~\cite{araujo-campos,araujo-ducoffe} for important results on the cycle convexity.

Clearly, $x$ is an extreme vertex of a c-convex set $S$ if and only if there is no cycle $C$ in $G[S]$ containing $x$.  

\begin{theorem}\label{thm:cycle-conv}
The cycle convexity of a graph $G$ is a convex geometry if and only if $G$ is a forest.
\end{theorem}

\begin{proof}
If $G$ contains a cycle $C$, then $H(V(C))$ contains no extreme vertices. Thus, if the cycle convexity of $G$ is a convex geometry, then $G$ is a forest. Conversely, if $G$ is a forest, then every $S\subseteq V(G)$ is a c-convex set of $G$ and satisfies $\mathit{ext}(S)=S$, i.e., it is the convex hull of its extreme vertices.     
\end{proof}

It is interesting to note that Theorems~\ref{thm:triangle-path}, ~\ref{thm:all-path}, and~\ref{thm:cycle-conv} all apply to the class of forests. However, the triangle path, all-path, and cycle convexities do not coincide in general. If $G=C_n$ ($n\geq 4$), the families of tp-convex sets, ap-convex sets, and c-convex sets of $G$ are distinct.

\subsection{$\Delta$-convexity}

A set $S\subseteq V(G)$ is \textit{$\Delta$-convex} if no vertex $u\notin S$ has two adjacent neighbors in $S$, that is, forms a triangle with two vertices in $S$. The \textit{$\Delta$-convexity} of $G$ is the family formed by the $\Delta$-convex sets of $G$ (see~\cite{anand} for more details). It is easy to see that $x$ is an extreme vertex of a $\Delta$-convex set $S$ if and only if $N(x)$ is an independent set of $G[S]$. 

\begin{theorem}\label{thm:delta-conv}
The $\Delta$-convexity of a graph $G$ is a convex geometry if and only if $G$ is triangle-free.
\end{theorem}

\begin{proof}
Suppose that the $\Delta$-convexity of $G$ is a convex geometry. If $G$ contains a triangle $abc$, then $S=H(\{a,b,c\})$ contains no extreme vertices, since $N(x)$ is not an independent set for every vertex $x$ in $G[S]$. This is a contradiction. Conversely, if $G$ is triangle-free, then every $S\subseteq V(G)$ is a $\Delta$-convex set of $G$ and satisfies $\mathit{ext}(S)=S$, i.e., it is the convex hull of its extreme vertices.
\end{proof}       

\subsection{$\mathcal{F}$-free convexities}\label{sec:Ffree}

We now introduce the concept of \textit{$\mathcal{F}$-free convexity}~\cite{araujo-sampaio}. Let $H$ be a nontrivial graph. Given a graph $G$, we say that $S \subseteq V(G)$ is {\em $H$-free convex} if for every $S' \subseteq S$ with $|S'| = |V(H)| - 1$ and $x \in V(G)$, if the subgraph of $G$ induced by $S' \cup \{x\}$ is isomorphic to $H$, then $x \in S$. Given a family ${\cal F}$ of nontrivial graphs, we say that $S \subseteq V(G)$ is {\em ${\cal F}$-free convex} if $S$ is $H$-free convex for every $H \in {\cal F}$. The convexity associated with the ${\cal F}$-free convex sets of $G$ is the ${\cal F}$-free convexity of $G$.

\begin{theorem} \label{the:F-free}
The ${\cal F}$-free convexity of $G$ is a convex geometry if and only if $G$ is ${\cal F}$-free.
\end{theorem}

\begin{proof}
First, suppose that $G$ is ${\cal F}$-free. By definition, every subset of $V(G)$ is ${\cal F}$-free convex, which means that the anti-exchange property is valid. Therefore, the ${\cal F}$-free convexity of $G$ is a convex geometry.

Conversely, suppose that the ${\cal F}$-free convexity of $G$ is a convex geometry and $G$ contains a graph of ${\cal F}$ as an induced subgraph. Let $S \subseteq V(G)$ with minimum size such that $G[S]$ is isomorphic to a graph $H \in {\cal F}$. Let $x,y$ be distinct vertices of $S$ and write $S' = S \setminus \{x,y\}$. By the minimality of $S$, we have that $S'$ is ${\cal F}$-free convex. Note also that $x \in H(S' \cup \{y\})$ and $y \in H(S \cup \{x\})$, which means that the anti-exchange property does not hold. This is a contradiction. Therefore, $G$ is ${\cal F}$-free.
\end{proof}

\begin{corollary}
Let ${\cal F}$ be the family of odd cycles. Then, the ${\cal F}$-free convexity of $G$ is a convex geometry if and only if $G$ is bipartite.
\end{corollary}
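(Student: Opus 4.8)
The plan is to derive the corollary directly from Theorem~\ref{the:F-free}. Taking ${\cal F}$ to be the family of all odd cycles $C_3, C_5, C_7, \ldots$, Theorem~\ref{the:F-free} tells us that $G$ is a convex geometry in the ${\cal F}$-free convexity if and only if $G$ is ${\cal F}$-free, i.e., $G$ contains no odd cycle as an induced subgraph. Thus the entire task reduces to proving the purely graph-theoretic equivalence: $G$ contains no induced odd cycle if and only if $G$ is bipartite.

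One direction is immediate from the classical characterization of bipartite graphs: if $G$ is bipartite then $G$ has no odd cycle at all, and in particular no induced odd cycle, so $G$ is ${\cal F}$-free.

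For the converse I would argue the contrapositive. Suppose $G$ is not bipartite; then $G$ contains an odd cycle as a subgraph (possibly with chords). Among all odd cycles of $G$, choose one, say $C$, of minimum length $\ell$. The key step is to show that $C$ is chordless, and hence induced. Indeed, if $C$ had a chord, that chord would split $C$ into two smaller cycles whose lengths sum to $\ell + 2$; since $\ell$ is odd, this sum is odd, so one of the two smaller cycles has odd length, contradicting the minimality of $C$. Therefore $C$ is an induced odd cycle, so $G$ is not ${\cal F}$-free. This establishes the equivalence, and combining it with Theorem~\ref{the:F-free} yields the corollary.

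The only nontrivial point is the shortest-odd-cycle argument used to upgrade the existence of an odd cycle to the existence of an \emph{induced} one; everything else is an immediate application of the preceding theorem. I expect no real obstacle here, since this parity-splitting observation is standard.
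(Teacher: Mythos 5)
Your proof is correct and follows the route the paper intends: the corollary is stated there without proof, as an immediate consequence of Theorem~\ref{the:F-free} together with the standard equivalence between bipartiteness and the absence of induced odd cycles. Your shortest-odd-cycle argument correctly supplies the one detail the paper leaves implicit, namely that a graph with no \emph{induced} odd cycle has no odd cycle at all.
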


\begin{corollary}
Let ${\cal F}$ be the family of graphs that can be obtained from $K_5$ or $K_{3,3}$ by subdivision of edges. Then, the ${\cal F}$-free convexity of $G$ is a convex geometry if and only if $G$ is planar.
\end{corollary}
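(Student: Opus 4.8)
The plan is to derive the corollary directly from Theorem~\ref{the:F-free}. Here ${\cal F}$ is the family of all graphs obtained from $K_5$ or $K_{3,3}$ by subdividing edges, so by that theorem $G$ is a convex geometry in the ${\cal F}$-free convexity if and only if $G$ has no \emph{induced} subgraph isomorphic to a subdivision of $K_5$ or $K_{3,3}$. The corollary therefore reduces to the purely graph-theoretic equivalence: $G$ has no induced subdivision of $K_5$ or $K_{3,3}$ if and only if $G$ is planar. I would establish this using Kuratowski's theorem, which characterizes planarity by the absence of a subdivision of $K_5$ or $K_{3,3}$ as a (not necessarily induced) subgraph.

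One direction is immediate. If $G$ is planar, then by Kuratowski it contains no subdivision of $K_5$ or $K_{3,3}$ even as a subgraph, hence none as an induced subgraph, so $G$ is ${\cal F}$-free; together with the ``if'' part of Theorem~\ref{the:F-free} this makes $G$ a convex geometry. The work is all in the converse: I must show that every nonplanar $G$ contains an \emph{induced} subdivision of $K_5$ or $K_{3,3}$. Kuratowski only supplies such a subdivision $K$ as a subgraph, and the subgraph of $G$ induced on $V(K)$ may contain extra chords. The natural strategy is to choose $K$ with the fewest vertices and then argue that any chord lets one reroute a branch path and extract a strictly smaller subdivision, contradicting minimality and forcing $K$ to be induced. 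I expect this cleaning step to be the main obstacle.

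In fact I suspect it is a genuine obstruction rather than a technicality. This is exactly the point where the present situation differs from the preceding corollary on bipartite graphs: there any odd cycle contains a chordless (induced) odd cycle, so induced and non-induced containment coincide, whereas a subdivision of $K_5$ or $K_{3,3}$ need not contain an induced one. For example, $K_{3,3}$ with one extra edge added inside a part is nonplanar (it still contains $K_{3,3}$), yet a short count shows it has no induced subdivision of $K_5$ or $K_{3,3}$ at all: every subdivision of $K_{3,3}$ has at least six vertices with equality only for $K_{3,3}$ itself, the unique six-vertex induced subgraph here is not $K_{3,3}$, and a $K_5$-subdivision is ruled out because only two vertices have degree at least four. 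Consequently the equivalence I am after --- and hence the corollary --- seems to require reading ${\cal F}$-containment topologically (or as a minor relation) rather than as the induced containment used to define the ${\cal F}$-free convexity. Settling how to bridge these two notions is the step I would have to resolve before the proof could go through.
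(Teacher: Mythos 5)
The paper offers no proof of this corollary: it is stated as an immediate consequence of Theorem~\ref{the:F-free} together with Kuratowski's theorem, which is precisely the one-line derivation you sketch at the outset. The obstruction you then isolate is real, and your conclusion should be stated more forcefully: with the paper's definitions the corollary is false, not merely in need of a bridging lemma. Since $H$-free convexity is defined through \emph{induced} subgraphs, Theorem~\ref{the:F-free} reduces the corollary to the claim that a graph is planar if and only if it has no induced subdivision of $K_5$ or $K_{3,3}$, whereas Kuratowski's theorem only forbids such subdivisions as (possibly non-induced) subgraphs. Your counterexample is correct: let $G$ be $K_{3,3}$ with parts $\{a,b,c\}$ and $\{x,y,z\}$ plus the edge $ab$. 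Then $G$ is nonplanar; its only six-vertex induced subgraph is $G$ itself, which has ten edges and hence is not $K_{3,3}$, and no larger subdivision of $K_{3,3}$ fits inside six vertices; and any subdivision of $K_5$ needs five vertices of degree four while $G$ has only two. So $G$ is ${\cal F}$-free, hence by Theorem~\ref{the:F-free} it \emph{is} a convex geometry in the ${\cal F}$-free convexity, yet it is not planar.

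Your contrast with the preceding corollary is also the right diagnosis: a shortest odd cycle is chordless, so for odd cycles induced and ordinary containment coincide, while a minimal Kuratowski subdivision need not be induced. The statement can be repaired either by taking ${\cal F}$ to be the family of graphs that are nonplanar but all of whose proper induced subgraphs are planar (then ${\cal F}$-freeness is exactly planarity and Theorem~\ref{the:F-free} applies verbatim), or by redefining $H$-free convexity via subgraph rather than induced-subgraph containment; your instinct that the containment must be read non-induced or topologically is exactly what either repair amounts to. As written, however, the corollary does not follow from Theorem~\ref{the:F-free}, and your refusal to paper over that step is the correct call.
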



The ${\cal F}$-free convexity defined by ${\cal F}=\{P_4\}$ is called \textit{$P_4^+$-convexity}. Recall that a {\em cograph} is a $P_4$-free graph. Given a graph $G$, we say that $S \subseteq V(G)$ is {\em $P^+_4$-convex} if, for every induced $P_4 = abcd$ in $G$, if $a,b,d \in S$, then $c\in S$. The convexity associated with the $P^+_4$-convex sets of $G$ is the $P_4^+$-convexity of $G$.

\begin{corollary}
The $P_4^+$-convexity of $G$ is a convex geometry if and only if $G$ is a cograph.
\end{corollary}

To conclude this section, we remark that the cycle convexity and the $\Delta$-convexity are ${\cal F}$-free convexities defined by taking ${\cal F}=\{C_k\mid k\geq 3\}$ and ${\cal F}=\{C_3\}$, respectively. Hence, Theorems~\ref{thm:cycle-conv} and~\ref{thm:delta-conv} can be viewed as corollaries of Theorem~\ref{the:F-free}. 

\section{Conclusions}

In this survey, we presented characterizations of several classes of graphs via convex geometries. The characterizations follow the following structure: A graph $G$ is in class ${\mathscr G}$ if and only if the ${\mathscr C}$-convexity of $G$ is a convex geometry. The table below summarizes the results.

An interesting question is to increase the table by including other important classes, such as distance-hereditary graphs and split graphs. Another line of research is to investigate convex geometries defined on digraphs. Finally, as far as the authors now, the $g^3$-Steiner convexities that are convex geometries have not been characterized yet.

\begin{center}
\begin{tabular}{ l | l }
${\mathscr G}$           & ${\mathscr C}$                 \\\hline
 chordal                 & monophonic                     \\
 Ptolemaic               & geodesic                       \\
 strongly chordal        & strong                         \\
 weak bipolarizable      & $m^3$                          \\
 Ptolemaic               & Steiner                        \\
 $(P_4,G_1,G_2,G_3,G_4)$-free & 3-Steiner                 \\
 (Theorem 10)                 & ${\cal R}$-Steiner        \\
 (Theorem 11)                 & $g^3_3$-Steiner           \\
 interval                & toll                           \\
 proper interval         & weakly toll                    \\
 chordal cograph         & $l^2$                          \\
 (Theorem 18)            & $l^3$                          \\
 forest of stars         & $P_3$                          \\
 forest                  & triangle path, all path, cycle \\
 triangle-free           & $\Delta$                       \\
 ${\mathcal F}$-free     & ${\mathcal F}$-free convexity  \\
 bipartite               & (Corollary 24)                 \\
 planar                  & (Corollary 25)                 \\
 cograph                 & $P_4^+$
\end{tabular}
\end{center}


\end{document}